\newcommand{\clr}{\textcolor[rgb]{1.00,0.00,0.00}}
\newcommand{\lmd}{\lambda}
\newcommand{\Lmd}{\Lambda}
\newcommand{\omg}{\omega}
\newcommand{\p}{\partial}
\newcommand{\afa}{\alpha}
\newcommand{\veps}{\varepsilon}
\newcommand{\fai}{\varphi}
\newcommand{\rme}{{\mathrm e}}
\newcommand{\rmT}{{\mathrm T}}
\DeclareMathOperator{\Res}{Res}
\newcommand{\mcalA}{\mathcal{A}}
\newcommand{\mcalF}{\mathcal{F}}
\newcommand{\mcalJ}{\mathcal{J}}
\newcommand{\mcalL}{\mathcal{L}}
\newcommand{\mcalM}{\mathcal{M}}
\newcommand{\mcalP}{\mathcal{P}}
\newcommand{\mcalQ}{\mathcal{Q}}
\newcommand{\bbZ}{\mathbb{Z}}
\newcommand{\bfu}{\mathbf{u}}
\newcommand*{\pp}[1]
{\frac{\partial   }
	{\partial #1}
}
\newcommand*{\pfrac}[2]
{\frac{\partial #1}
	{\partial #2}
}
\newcommand*{\pair}[2]                   
{
	\left\langle
	#1,#2
	\right\rangle
}
\newcommand*{\Bigset}[2]
{
	\left\{ #1 \,\middle|\, #2 \right\}             
}
\newcommand{\beq}{\begin{equation}}
	\newcommand{\eeq}{\end{equation}}
\DeclareMathOperator*{\res}{Res}
\DeclareMathOperator{\Der}{Der}
\DeclareMathOperator{\grad}{grad}
\DeclareMathOperator{\td}{d\!}
\newtheorem{thm}{Theorem}[section]
\newtheorem{rmk}[thm]{Remark}
\newtheorem{lem}[thm]{Lemma}
\newtheorem{prop}[thm]{Proposition}
\newtheorem{defn}{Definition}[section]
\newtheorem{ex}{Example}[section]
\numberwithin{equation}{section}
\begin{document}
	\title {{ Asymmetric rational reductions of 2D-Toda hierarchy \\and a generalized Frobenius manifold}
	\footnotetext{{\footnotesize$^*$Correspondence should be addressed to Qiulan Zhao; } \\
		{\footnotesize Electronic mail:} {\footnotesize
			qlzhao@sdust.edu.cn}}}
	\author{{
		Haonan Qu$^{1}$,~ Qiulan Zhao$^{*2}$} \\
	[1em]
		\footnotesize $^{1}$ School of Mathematical Sciences, Peking University,\\
		\footnotesize Beijing, 100871, P. R. China.\\
		[1em]
		\footnotesize $^{2}$ College of Mathematics and Systems Science, Shandong University of Science and Technology,\\
		\footnotesize Qingdao, 266590, Shandong, P. R. China. \\
	}
	\maketitle
	\begin{abstract}
We study the local bihamiltonian structures of the asymmetric
rational reductions of the 2D-Toda hierarchy (RR2T) of types $(2,1)$ and $(1,2)$ at the full-dispersive level,
and construct a three-dimensional generalized Frobenius manifold with non-flat unity associated with the $(2,1)$-type.
Furthermore, we explicitly relate the $(2,1)$-type RR2T to the bi-graded Toda and constrained KP hierarchies
via linear reciprocal and Miura-type transformations.
	\end{abstract}
	\textbf{Keywords} Rational reductions of 2D-Toda; Bihamiltonian structure; Central invariant; Generalized Frobenius manifold; Generalized Legendre transformation.
	\tableofcontents
	\section{Introduction}
	\par The two-dimensional Toda equation
	\begin{equation}
		\left(\partial_{x}^2-\partial_{t}^2\right)x_n=\rme^{x_{n+1}}-2\rme^{x_n}+\rme^{x_{n-1}},
	\end{equation}
	possesses interesting structures and properties \cite{1,3}, and its commuting flows generate the famous \textit{2D-Toda hierarchy} \cite{Toda hierarchy} with a tri-Hamiltonian structure \cite{trh -toda}. This hierarchy and its various reductions play important roles in the field of mathematical physics, such as orthogonal polynomials and random matrix theory \cite{random permutations,Matrix integrals}. Takasaki studied a reduction of the 2D Toda hierarchy at the full dispersive level, in which the Lax operator has the form \cite{Takasaki}
	\begin{equation}\label{TLO}
		\mathbb L = \left(1 + \beta_1 \Lambda^{-1} + \cdots + \beta_N \Lambda^{-N}\right)^{-1}\Lmd^{1-N}
\left(\Lambda^N + \alpha_{1} \Lambda^{N-1} + \cdots + \alpha_N\right)
	\end{equation}
for some $N\in\bbZ_+$,
	where the shift operator $\Lmd=\rme^{\veps\p_x}$ satisfies $\Lmd y (x)=y (x+\veps)$.
	This decomposition characterizes the generalized Ablowitz–Ladik hierarchy viewed as a subsystem of the Toda hierarchy, and it has been shown to be an important integrable structure of topological string theory on $N$-th generalized conifold \cite{Takasaki}.
	Subsequently, Brini and his collaborators constructed a two-parameter family of the so-called \textit{rational reductions of 2D-Toda hierarchy}
(abbreviated as RR2T), and established their characterization in terms of block Toeplitz matrices for the
	associated factorization problem \cite{rational-reduction}. The distinctive feature of the RR2T of type
	$(a, b)\in\bbZ_+^2$ is that the corresponding Lax operator has the form
	\begin{equation}\label{ab type RR2T}
		L_{(a,b)} = \left(  1 + \beta_1 \Lambda^{-1} + \cdots + \beta_b \Lambda^{-b}\right)^{-1}
\left(
  \Lambda^a + \alpha_{1} \Lambda^{a-1} + \cdots + \alpha_a
\right).
	\end{equation}
	They generalized the relation between the Ablowitz--Ladik hierarchy and Gromov–Witten theory by proving an analogous mirror theorem for the general rational reduction \cite{rational-reduction}, however their discussion was restricted to the dispersionless level.
Many well-known integrable hierarchies arise from further reductions of the RR2T, including the bi-graded Toda hierarchy \cite{extend bi Toda} and the $q$-deformed KdV hierarchy \cite{q-deformed GD}. 

 If the parameters $a$ and $b$ in \eqref{ab type RR2T} are equal,
 the corresponding RR2T is called \textit{symmetric}; otherwise, it is called \textit{asymmetric}.
 A typical example of the symmetric RR2T is the case $a=b=1$,
 in which the corresponding integrable hierarchy is the Ablowitz--Ladik hierarchy \cite{AL, Brini 2012, AL-triham},
 also known as the relativistic Toda hierarchy \cite{Kupershmidt, rtoda}.
 As one of the most important differential-difference hierarchies,
it has been extensively studied \cite{Geng,suris,AL symmetries1,AL symmetries2},
and was shown in \cite{AL-triham} to admit a local tri-Hamiltonian structure.
The relation between a particular tau function of
this hierarchy and the generating function of Gromov--Witten invariants of local $\mathbb{CP}^1$ was studied in \cite{GW, Brini 2012,GW2}.
Moreover, Brini and his collaborators introduced a generalized Frobenius manifold $M_{\text{AL}}$ with non-flat unity
corresponding to the Ablowitz--Ladik hierarchy in \cite{Brini 2012}.
Subsequently, motivated by this $M_{\text{AL}}$,
the Dubrovin–Zhang theory for the topological deformation of the Principal Hierarchy of the
class of generalized Frobenius manifolds with non-flat unity was established in \cite{Liu2024, GFM}.
Afterwards, an extension of the Ablowitz--Ladik hierarchy was given and shown to be equivalent to the topological deformation of the Principal Hierarchy of $M_{\text{AL}}$ \cite{extend AL}.
Therefore, the Ablowitz–Ladik hierarchy has been formulated within the framework of Dubrovin–Zhang theory under the four-axiom setting
\cite{Normal forms, Liu lecture notes} with special conditions.

In contrast to the symmetric case, in this paper we investigate two typical examples of asymmetric RR2T:
the RR2T of type $(2,1)$ with Lax operator
	\begin{align}\label{250905-1432}
		L&=\left(1-w\Lmd^{-1}\right)^{-1}\left(\Lmd^2+u\Lmd+v\right),
	\end{align}
and that of type $(1,2)$ with Lax operator
	\begin{align} \label{Lax type 1,2}
		\mcalL&=\
		\left(1+U\Lambda^{-1}+V\Lambda^{-2}\right)^{-1}\left(\Lambda+W\right).
	\end{align}
We show that each of them admits a local bihamiltonian structure,
and that the two hierarchies, together with their corresponding bihamiltonian structures, are related by a certain Miura transformation and spatial reflection.
We also construct a generalized Frobenius manifold $M$ with non-flat unity associated with the $(2,1)$-type RR2T,
and show that its dispersionless flows belong to the Principal Hierarchy of $M$.
Moreover, there are two generalized Legendre transformations \cite{LiuQuZhang,GLt}
that relate $M$ to the Frobenius manifolds corresponding to the bi-graded Toda and the constrained KP hierarchies.
These transformations motivate us to relate the RR2T of type $(2,1)$ to the bi-graded Toda hierarchy and the constrained KP hierarchy
via certain linear reciprocal and Miura-type transformations.

The paper is organized as follows. In Sect.\,2, we derive the bihamiltonian formalisms of the two hierarchies mentioned above,
and relate these two hierarchies via a certain Miura-type transformation up to spatial reflection.
Moreover, we also show that all the central invariants of the bihamiltonian structures of the $(2,1)$-type RR2T are equal to $\frac{1}{24}$.
In Sect.\,3, we study the generalized Frobenius manifold $M$ associated with the $(2,1)$-type RR2T.
In Sect.\,4, we explicitly construct the linear reciprocal and Miura-type transformations that relate the RR2T of type $(2,1)$ to the bi-graded Toda and the constrained KP hierarchies. In Sect.\,5, we summarize the main conclusions and discuss possible directions for future research.

	\section{Asymmetric RR2T and their bihamiltonian structures}

Let us begin with the definition and basic properties of the RR2T of type $(2,1)$.

\subsection{The definition of the RR2T of type $(2,1)$}
\label{subsection: defo of 2,1 type}

Note that the Lax operator of the $(2,1)$-type RR2T \eqref{250905-1432} has the form
\begin{equation}\label{Lax-L}
\begin{split}
  L&=\,(1-w\Lmd^{-1})^{-1}(\Lmd^2+u\Lmd+v),\\
  &=\,
    \Lmd^2 + (u+w)\Lmd + (v+u^-w+w^-w)+\cdots,
\end{split}
\end{equation}
here the shift operator $\Lmd=\rme^{\veps\p_x}$,
unknown functions $(u^1,u^2,u^3):=(u,v,w)$, and
for any function $f$, we introduce the short notations
\begin{equation}
  f^+=\Lmd f,\qquad f^-=\Lmd^{-1}f.
\end{equation}

If we denote the operators
\begin{equation}\label{Lax-AB}
  A=\Lmd^2+u\Lmd+v,\quad
  B=1-w\Lmd^{-1},
\end{equation}
then $L=B^{-1}A$. We also introduce
\begin{equation}\label{Lax-Ltil}
\begin{split}
\tilde L := AB^{-1} = \Lmd^2 + (u+w^{++})\Lmd+(v+uw^++w^+w^{++}) +\cdots .
\end{split}
\end{equation}
For any operator of the form $X=\sum_{k\in\bbZ} a_k\Lmd^k$,
we denote
\begin{equation}
  X_+=\sum_{k\geq 0}a_k\Lmd^k,\qquad
  X_-=\sum_{k\leq -1}a_k\Lmd^k.
\end{equation}

\begin{defn}
  The RR2T of type $(2,1)$ consists of the following flows:
\begin{align}
  \veps\pfrac{A}{t^{2,k}}
&=\, \frac{2^{k+\frac12}}{(2k+1)!!}
   \left(
     \left(
       \tilde L^{k+\frac12}
     \right)_+A
    -A\left(
        L^{k+\frac12}
      \right)_+
   \right) , \label{GAL-1}
\\
  \veps\pfrac{B}{t^{2,k}}
&=\, \frac{2^{k+\frac12}}{(2k+1)!!}
   \left(
     \left(
       \tilde L^{k+\frac12}
     \right)_+B
    -B\left(
        L^{k+\frac12}
      \right)_+
   \right) , \label{GAL-2}
\\
  \veps\pfrac{A}{t^{3,k}}
&=\, \frac{1}{(k+1)!}
   \left(
     \left(
       \tilde L^{k+1}
     \right)_+A
    -A\left(
        L^{k+1}
      \right)_+
   \right) , \label{GAL-3}
\\
  \veps\pfrac{B}{t^{3,k}}
&=\, \frac{1}{(k+1)!}
   \left(
     \left(
       \tilde L^{k+1}
     \right)_+B
    -B\left(
        L^{k+1}
      \right)_+
   \right) , \label{GAL-4}
\\
  \veps\pfrac{A}{t^{0,-k-1}}
&=\, (-1)^kk!
   \left(
     \left(
       \tilde M^{k+1}
     \right)_-A
    -A\left(
        M^{k+1}
      \right)_-
   \right) , \label{GAL-5}
\\
  \veps\pfrac{B}{t^{0,-k-1}}
&=\, (-1)^kk!
   \left(
     \left(
       \tilde M^{k+1}
     \right)_- B
    -B\left(
        M^{k+1}
      \right)_-
   \right) \label{GAL-6}
\end{align}
for all $k\geq 0$.  Here $\{t^{2,k}, t^{3,k}, t^{0,-k-1}\}_{k\geq 0}$ are the set of time variables,
the operators
\[
  L^{\frac12}=\Lmd+\sum_{r\geq 0}c_r\Lmd^{-r},\qquad
  \tilde L^{\frac12} = \Lmd + \sum_{r\geq 0} \tilde c_r\Lmd^{-r}
\]
are uniquely determined by $(L^{\frac12})^2=L$,
$(\tilde L^{\frac12})^2=\tilde L$ respectively, and
\begin{align}
  M&=\, A^{-1}B = -\frac wv\Lmd^{-1}+\sum_{r\geq 0}m_r\Lmd^r, \label{Lax-M}\\
  \tilde M &=\, BA^{-1} =  -\frac{w}{v^-}\Lmd^{-1}
    +\sum_{r\geq 0}\tilde m_r\Lmd^r
\end{align}
for certain functions $m_r$ and $\tilde m_r$.
\end{defn}

It is clear that \eqref{GAL-1}--\eqref{GAL-6} are equivalent to the following Lax equations for $k\geq 0$:
\begin{align}
  \veps\pfrac{L}{t^{2,k}}
&=\,
  \frac{2^{k+\frac12}}{(2k+1)!!}
    \left[
      \left(L^{k+\frac12}\right)_+, L
    \right],  \label{Lax RR2T-1}
\\
  \veps\pfrac{L}{t^{3,k}}
&=\,
  \frac{1}{(k+1)!}
    \left[
      \left(L^{k+1}\right)_+, L
    \right], \label{Lax RR2T-2}
\\  \veps\pfrac{L}{t^{0,-k-1}}
&=\,
  (-1)^kk!
    \left[
      \left(M^{k+1}\right)_-, L
    \right].
\end{align}

\begin{rmk}
  The RR2T of type $(2,1)$ admits a family of additional symmetries, which are denoted by
  $\{\pp{t^{0,k}}, \pp{t^{1,k}}\}_{k\geq 0}$, where $\pp{t^{0,0}}=\p_x$.
  See Remark \ref{rmk:extended flows} below.
  Moreover, we will identify the spatial variable $x$ with the new time variable $t^{0,0}$, i.e.
\begin{equation}\label{flow-ppx}
  x=t^{0,0},\qquad \p_{x}=\pp{t^{0,0}}.
\end{equation}
\end{rmk}

Now we are to express the first few flows of this hierarchy explicitly.
Note that
\begin{align*}
  L^{\frac12}&=\,
\Lmd + \frac{1}{\Lmd+1}(u+w)
+\frac{1}{\Lmd+1}
\left(
  v+u^-w+w^-w-
  \left(
    \frac{1}{\Lmd+1}(u+w)
  \right)^2
\right)\Lmd^{-1}+\cdots ,
\\
  \tilde L^{\frac12}&=\,
\Lmd + \frac{1}{\Lmd+1}(u+w^{++})
+\frac{1}{\Lmd+1}
\left(
  v+uw^++w^+w^{++} -
  \left(
    \frac{1}{\Lmd+1}(u+w^{++})
  \right)^2
\right)\Lmd^{-1}+\cdots ,
\end{align*}
here the differential operator
$
  \frac{1}{\Lmd+1} = \frac{1}{\rme^{\veps\p_x}+1}
=
  \frac12 - \frac{\veps}{4}\p_x
 +\frac{\veps^3}{48}\p_x^3-\frac{\veps^5}{480}\p_x^5+O(\veps^7).
$
Then using \eqref{GAL-1}--\eqref{GAL-6} and by straightforward calculation, we obtain
\begin{align}
  \pfrac{u}{t^{2,0}}
&=\,
  \frac{\sqrt{2}}{\veps}
  \left(
  (v^+-v)+u\frac{\Lmd-1}{\Lmd+1}(w^+-u)
  \right) \notag\\
&=\,
  \frac{1}{\sqrt{2}}\left(-u u_x +2 v_x + u w_x\right)+
  \frac{\veps}{\sqrt{2}}\left(v_{xx} + u w_{xx}\right) + O(\veps^2),
\label{rad-flow-u}\\
  \pfrac{v}{t^{2,0}}
&=\,
  \frac{\sqrt{2}}{\veps}
  v(w^+-w)
=
 \sqrt{2}\, v w_x
 + \frac{\veps}{\sqrt{2}}v w_{xx} + O(\veps^2)
,
\label{rad-flow-v}\\
  \pfrac{w}{t^{2,0}}
&=\,
  \frac{\sqrt{2}}{\veps}
  w\frac{\Lmd-1}{\Lmd+1}
  \left(w^++w+w^-+u^-\right) \notag\\
&=\,
  \frac{1}{\sqrt{2}}\left(w u_x + 3 w w_x\right)
  -\frac{\veps}{\sqrt{2}}w u_{xx} + O(\veps^2),
\label{rad-flow-w}
\end{align}
and
\begin{align}
  \veps\pfrac{u}{t^{3,0}}
&=\,
  uw^+w^{++} + v^+w^{++} - uww^+-vw,
\label{positive-flow-u}\\
  \veps\pfrac{v}{t^{3,0}}
&=\,
  v\left(
    uw^+ + w^+w^{++} - u^-w - w^-w
  \right),
\label{positive-flow-v}\\
  \veps\pfrac{w}{t^{3,0}}
&=\,
  w\left(
    uw^+ + w^+w^{++} + v
   -u^{--}w^- -w^{--}w^- - v^-
  \right),\label{positive-flow-w}
\end{align}
\begin{equation}\label{negative-t-0-(-1)}
\veps\pfrac{u}{t^{0,-1}}
= \frac{w^{++}}{v^{++}}-\frac{w}{v^-},
\quad
\veps\pfrac{v}{t^{0,-1}}
=
  \frac{uw^+}{v^+}-\frac{u^-w}{v^-},
\quad
\veps\pfrac{w}{t^{0,-1}}
=\frac{w}{v^-}-\frac{w}{v}.
\end{equation}

\subsection{Bihamiltonian formalism of the RR2T of type $(2,1)$}
\label{subsection: biham formalism of 2,1}

Now we are to derive the bihamiltonian formalism of the $(2,1)$-type RR2T.
Consider the following skew-symmetric
operator-valued matrix
\begin{equation}\label{HamP1}
  \mcalP_1
=
  \begin{pmatrix}
    \Lmd-\Lmd^{-1} & w\Lmd^{-1}-\Lmd^2w & 0 \\
    w\Lmd^{-2}-\Lmd w & w\Lmd^{-1}u-u\Lmd w & (\Lmd-1)w \\
    0 & w(1-\Lmd^{-1}) & 0
  \end{pmatrix},
\end{equation}
we declare that this is a Hamiltonian operator.
In fact, consider the following new coordinates
\begin{equation}\label{full-genera-z}
		\left\{
		\begin{aligned}
			z^1&= v+u^{-}w+w^{-}w,\\
			z^2&= \frac{\sqrt{2}}{\Lmd+1}(u+w),\\
			z^3&= \log w,
		\end{aligned}
		\right.
	\end{equation}
then the Jacobian $\mcalJ=(J^\afa_\beta)=
	\left(
	\sum\limits_{k\geq 0}
	\pfrac{z^\afa}{u^{\beta,k}}\p_x^k
	\right)$
and its adjoint $\mcalJ^\dag$ have the form
	\begin{equation*}
		\mcalJ
		=
		\begin{pmatrix}
			w\Lmd^{-1} & 1 & u^-+w^-+w\Lmd^{-1} \\[4pt]
			\frac{\sqrt{2}}{\Lmd+1} & 0 & \frac{\sqrt{2}}{\Lmd+1} \\[4pt]
			0 & 0 & \frac 1w
		\end{pmatrix},\quad
		\mcalJ^\dag =
		\begin{pmatrix}
			\Lmd w & \frac{\sqrt{2}\Lmd}{\Lmd+1} & 0 \\[4pt]
			1 & 0 & 0 \\[4pt]
			u^-+w^-+\Lmd w & \frac{\sqrt{2}\Lmd}{\Lmd+1} & \frac 1w
		\end{pmatrix},
	\end{equation*}
where $(u^1, u^2, u^3)=(u,v,w)$ are the old coordinates, and jet variables $u^{\beta,k}:=\p_x^k u^{\beta}$.
Then it can be verified that
	\begin{align}\label{const-HamP1}
		\mcalJ\mcalP_1\mcalJ^\dag &=\,
		\begin{pmatrix}
			0 & 0 & \Lmd-1 \\[4pt]
			0 & 2\frac{\Lmd-1}{\Lmd+1} & 0 \\[4pt]
			1-\Lmd^{-1} & 0 & 0
		\end{pmatrix},
	\end{align}
which is a constant Hamiltonian operator with respect to the new coordinates $(z^1, z^2, z^3)$.

\begin{thm}\label{thm:HamFormalism-1}
  The RR2T of $(2,1)$-type \eqref{GAL-1}--\eqref{GAL-6}
  can be represented by the following Hamiltonian systems:
  \begin{equation}\label{HamFormalism-1}
    \veps\pp{t^{i,k}}
    \begin{pmatrix}
      u \\[2pt]
      v \\[2pt]
      w
    \end{pmatrix}
  =
    \mcalP_1
    \begin{pmatrix}
      \delta H_{i,k}/\delta u \\[2pt]
      \delta H_{i,k}/\delta v \\[2pt]
      \delta H_{i,k}/\delta w
    \end{pmatrix}
  \end{equation}
for $i\in\{2,3\}$, $k\geq 0$ or $i=0$, $k\leq -1$.
Here the Hamiltonian operator $\mcalP_1$ is given in \eqref{HamP1},
and the Hamiltonians have the expressions
\begin{align}\label{AL-H2k}
  H_{2,k} &=\,
  \frac{2^{k+\frac 32}}{(2k+3)!!}
  \int\Res\left(L^{k+\frac 32}\right)\td x,\quad k\geq -1,
\\
  H_{3,k} &=\,
  \frac{1}{(k+2)!}
  \int\Res\left(L^{k+2}\right)\td x,\quad k\geq -1,
\\
  H_{0,-k-1} &=\,
  (-1)^k(k-1)!
  \int\Res\left(M^k\right)\td x,\quad k\geq 1,  \label{AL-H0-k-1}
\\
  H_{0,-1} &=\, \int \log\frac wv \td x,  \label{H-0,-1}
\end{align}
where $M=A^{-1}B$ is defined as in \eqref{Lax-M}.
\end{thm}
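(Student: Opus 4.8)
The plan is to exploit the fact that $\mcalP_1$ has already been shown to be a Hamiltonian operator: its pullback $\mcalJ\mcalP_1\mcalJ^\dag$ to the coordinates \eqref{full-genera-z} is the constant operator \eqref{const-HamP1}, which is manifestly skew-adjoint and trivially satisfies the Jacobi identity, and the Miura-type change of coordinates \eqref{full-genera-z} preserves the Hamiltonian property. Hence the Jacobi identity is not at issue, and what remains is to verify the flow identity
\[
\veps\,\partial_{t^{i,k}}(u,v,w)^{\rmT}=\mcalP_1\,\big(\delta H_{i,k}/\delta u,\ \delta H_{i,k}/\delta v,\ \delta H_{i,k}/\delta w\big)^{\rmT},
\]
whose left-hand sides are the explicit Lax flows recorded in \eqref{Lax RR2T-1}--\eqref{Lax RR2T-2} together with the accompanying negative flow. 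First I would set up the standard residue calculus for difference operators: the pairing $\langle P,Q\rangle=\int\Res(PQ)\,\td x$ is cyclic, and $\int\Res[P,Q]\,\td x=0$ because $\Res[P,Q]\in\operatorname{Im}(\Lmd-1)$. These two facts simultaneously guarantee that the $H_{i,k}$ are conserved along every flow and furnish the tool for the variational derivatives.

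The core computation is the gradient formula. Writing $L=B^{-1}A$ one has $\delta L=B^{-1}\big(\delta A-(\delta B)\,L\big)$, so for $H=c\int\Res(L^{p})\,\td x$ cyclicity gives
\[
\delta H\equiv c\,p\int\Res\!\big((L^{p-1}B^{-1})\,\delta A-(L^{p}B^{-1})\,\delta B\big)\,\td x \pmod{\operatorname{Im}(\Lmd-1)}.
\]
Since $\delta A=(\delta u)\Lmd+\delta v$ and $\delta B=-(\delta w)\Lmd^{-1}$, the coefficient-extraction rule $\int\Res\big(X\,(\delta f)\,\Lmd^{j}\big)\,\td x=\int\big(\Lmd^{j}\,a_{-j}\big)\,\delta f\,\td x$, where $a_{-j}$ denotes the coefficient of $\Lmd^{-j}$ in $X=\sum_k a_k\Lmd^k$, converts this into explicit expressions for $\delta H/\delta u,\ \delta H/\delta v,\ \delta H/\delta w$ in terms of prescribed coefficients of $L^{p-1}B^{-1}$ and $L^{p}B^{-1}$; the normalizations chosen in \eqref{AL-H2k}--\eqref{AL-H0-k-1} are exactly those that absorb the factor $p$. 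The same scheme applies to the fractional Hamiltonians $H_{2,k}$ via $L^{1/2}$, and, using $M=A^{-1}B$ with $\delta M=A^{-1}(\delta B-(\delta A)\,M)$, to the negative Hamiltonians $H_{0,-k-1}$; the exceptional case \eqref{H-0,-1} is elementary, with gradient $(0,-1/v,1/w)$.

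Finally I would substitute these gradients into $\mcalP_1$ and match the result entrywise against the Lax flows. It is convenient to phrase this at the level of generating series, contracting against the resolvent of $L$ (and of $M$), so that all $k$ are treated at once: the claim then becomes a single operator identity asserting that $\mcalP_1$ intertwines the variational gradient with the commutator $[P,L]$, once the latter is read off on the coefficients $u,\ v,\ w$ of $A$ and $B$. As a consistency check one verifies the lowest flows directly; for instance, for \eqref{H-0,-1} the three rows of $\mcalP_1\,(0,-1/v,1/w)^{\rmT}$ reproduce exactly \eqref{negative-t-0-(-1)}. I expect the main obstacle to be precisely this last matching step: organizing the residue bookkeeping so that the seemingly ad hoc entries of $\mcalP_1$ (built from $w$, $u$ and shifts) are seen to carry the gradient into the commutator form \emph{uniformly} across the positive, fractional, and negative flows. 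An alternative that may streamline the algebra is to carry out the entire verification in the flat coordinates \eqref{full-genera-z}, where $\mcalP_1$ is the constant operator \eqref{const-HamP1} whose only couplings are $z^1\!\leftrightarrow\! z^3$ and $z^2\!\leftrightarrow\! z^2$.
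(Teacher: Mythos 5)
Your overall strategy coincides with the paper's: establish that $\mcalP_1$ is Hamiltonian by passing to the coordinates \eqref{full-genera-z}, compute the variational gradients by varying $L=B^{-1}A$ and using cyclicity of $\int\Res(\cdot)\,\td x$, and then match $\mcalP_1$ applied to those gradients against the flows entrywise. Your gradient formula is correct and reproduces exactly the paper's lemma: with $\delta A=(\delta u)\Lmd+\delta v$ and $\delta B=-(\delta w)\Lmd^{-1}$ one obtains
$\delta H_{3,k}/\delta u=\tfrac{1}{(k+1)!}(b^{k+1}_{-1})^{+}$,
$\delta H_{3,k}/\delta v=\tfrac{1}{(k+1)!}b^{k+1}_{0}$,
$\delta H_{3,k}/\delta w=\tfrac{1}{(k+1)!}(b^{k+2}_{1})^{-}$,
where $L^{k}B^{-1}=\sum_s b^k_s\Lmd^s$, in agreement with \eqref{var-derivative}.

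The gap is in the final matching, which you yourself flag as the main obstacle but do not resolve. The left-hand sides of \eqref{HamFormalism-1} are not read off from the scalar Lax equations \eqref{Lax RR2T-1}--\eqref{Lax RR2T-2} on $L$ alone; they come from the flows \eqref{GAL-1}--\eqref{GAL-6} on $A$ and $B$ separately, and these involve the second operator $\tilde L=AB^{-1}$ through combinations such as $(\tilde L^{k+1})_+A-A(L^{k+1})_+$. Your proposal never mentions $\tilde L$, so the coefficients $\tilde a^{k+1}_s$ that appear in $\veps\partial_{t^{3,k}}u$, $\veps\partial_{t^{3,k}}v$, $\veps\partial_{t^{3,k}}w$ (see \eqref{Dut3k}--\eqref{Dwt3k}) cannot be expressed through the coefficients $b^{k}_s$ entering the gradients. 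The paper bridges this by the coefficient identities derived from $\tilde L^kA=AL^k$, $\tilde L^kB=BL^k$, $(L^kB^{-1})B=L^k$ and $B(L^kB^{-1})=\tilde L^k$ (equations \eqref{coef-relation-1}--\eqref{coef-relation-ab}), together with the recursions coming from $\tilde L^{k+1}=A(L^kB^{-1})$ and $L^{k+1}=(L^kB^{-1})A$ (\eqref{coef-recur-1}--\eqref{coef-recur-2}); these are the missing ingredient, and once they are in place the entrywise verification is a finite computation, with a resolvent or generating-series packaging optional rather than essential. Your consistency check for \eqref{H-0,-1} is fine, but the alternative of carrying out the whole verification in the coordinates \eqref{full-genera-z} would not avoid the issue, since the flows would still have to be re-expressed there through the same operator identities.
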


We note that for an operator of the form $X=\sum_{s\in\bbZ}a_s\Lmd^s$,
its residue $\Res X = a_0$,
and for a local functional $H=\int h(u, u_x, u_{xx}, \dots)\td x$, its variational derivatives
\[
  \frac{\delta H}{\delta u^\afa}
=
  \sum_{s\geq 0}
    (-\p_x)^s\pfrac{H}{u^{\afa,s}}.
\]

To prove the above theorem, we introduce some notations and lemmas.
For the operators $L, \tilde L, B$ defined in \eqref{Lax-L}--\eqref{Lax-Ltil}, denote
\begin{equation}\label{def-of-coef-ab}
  L^k = \sum_{s\leq 2k}a^k_s \Lmd^s, \qquad
  \tilde L^k = \sum_{s\leq 2k}\tilde a^k_s \Lmd^s, \qquad
  L^kB^{-1}  = \sum_{s\leq 2k} b^k_s \Lmd^s
\end{equation}
for $k\geq 0$, and $a^k_s=\tilde a^k_s = b^k_s = 0$ if $s>2k$,
then \eqref{GAL-3}--\eqref{GAL-4} can be rewritten as
\begin{align}
  \veps\pfrac{u}{t^{3,k}}
&=\,
  \frac{1}{(k+1)!}
  \left(
    u\tilde a^{k+1}_0 - u\left(a^{k+1}_0\right)^+
   +v^+\tilde a_1^{k+1} - v a^{k+1}_1
  \right),
\label{Dut3k}\\
  \veps\pfrac{v}{t^{3,k}}
&=\,
  \frac{1}{(k+1)!}
  \left(
    v\tilde a^{k+1}_0 - v a^{k+1}_0
  \right),
\label{Dvt3k}\\
  \veps\pfrac{w}{t^{3,k}}
&=\,
  \frac{1}{(k+1)!}
  \left(
    w\tilde a^{k+1}_0 - w\left(a^{k+1}_0\right)^-
  \right).
\label{Dwt3k}
\end{align}

\begin{lem}
  The following identities hold true for each $k\geq 0$:
  \begin{align}
     \tilde a^{k}_{-1}
   + u\tilde a^{k}_0
   + v^+\tilde a^{k}_1
  &=\,
      (a^{k}_{-1} )^{++}
    +u (a^{k}_0 )^+
    +v a^k_1, \label{coef-relation-1}
  \\
    \tilde a^k_{-1} - w\tilde a^k_0
  &=\,
    a^k_{-1} - w (a^k_0)^-,  \label{coef-relation-2}
  \\
    \tilde a^k_0 - w^+\tilde a^k_1
  &=\,
    a^k_0 - w (a^k_1)^-, \label{coef-relation-3}
  \end{align}
\begin{equation}
  \left\{
  \begin{split}
    a^k_1 &=\, b^k_1 - w^{++}b^k_2, \\
    a^k_0 &=\, b^k_0 - w^+ b^k_1, \\
    a^k_{-1} &=\, b^k_{-1} - w b^k_0,
  \end{split}
  \right.
\qquad\quad
  \left\{
  \begin{split}
    \tilde a^k_1 &=\, b^k_1 - w (b^k_2)^-, \\
    \tilde a^k_0 &=\, b^k_0 - w (b^k_1)^-, \\
    \tilde a^k_{-1} &=\, b^k_{-1} - w (b^k_0)^-.
  \end{split}
  \right. \label{coef-relation-ab}
\end{equation}
\end{lem}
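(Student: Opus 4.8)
The plan is to reduce the whole lemma to two elementary intertwining relations between the Lax operators and then to read off each identity by matching a single power of $\Lmd$. Since $L=B^{-1}A$ and $\tilde L=AB^{-1}$, one has at once
\begin{equation*}
  \tilde L A = AB^{-1}A = AL, \qquad \tilde L B = AB^{-1}B = A = BB^{-1}A = BL,
\end{equation*}
i.e. $\tilde L A=AL$ and $\tilde L B=BL$. A one-line induction then gives, for every $k\ge 0$,
\begin{equation*}
  \tilde L^{k}A = A L^{k}, \qquad \tilde L^{k}B = B L^{k},
\end{equation*}
and from the second relation I also record the two factorizations $\tilde L^{k}=B\,(L^{k}B^{-1})$ and $L^{k}=(L^{k}B^{-1})\,B$.

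First I would establish the six relations in \eqref{coef-relation-ab}. Expanding $L^{k}=(L^{k}B^{-1})B=\big(\sum_s b^{k}_s\Lmd^{s}\big)(1-w\Lmd^{-1})$ and commuting each $\Lmd^{s}$ past $w$ via $\Lmd^{s}w=(\Lmd^{s}w\Lmd^{-s})\Lmd^{s}$, the coefficient of $\Lmd^{s}$ gives $a^{k}_s=b^{k}_s-(\Lmd^{s+1}w)\,b^{k}_{s+1}$; the cases $s=1,0,-1$ are precisely the left column (with $\Lmd^{2}w=w^{++}$, $\Lmd w=w^{+}$, $w$). Expanding instead $\tilde L^{k}=B\,(L^{k}B^{-1})=(1-w\Lmd^{-1})\sum_s b^{k}_s\Lmd^{s}$ and commuting $\Lmd^{-1}$ past $b^{k}_s$ gives $\tilde a^{k}_s=b^{k}_s-w\,(b^{k}_{s+1})^{-}$, whose cases $s=1,0,-1$ are the right column.

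Next I would extract the three coefficient identities from the intertwining relations. Comparing the coefficient of $\Lmd^{1}$ on the two sides of $\tilde L^{k}A=AL^{k}$, with $A=\Lmd^{2}+u\Lmd+v$, the left side contributes $\tilde a^{k}_{-1}+u\tilde a^{k}_{0}+v^{+}\tilde a^{k}_{1}$ and the right side $(a^{k}_{-1})^{++}+u(a^{k}_{0})^{+}+v a^{k}_{1}$, which is \eqref{coef-relation-1}. Comparing the coefficients of $\Lmd^{-1}$ and of $\Lmd^{0}$ on the two sides of $\tilde L^{k}B=BL^{k}$ yields \eqref{coef-relation-2} and \eqref{coef-relation-3} respectively; alternatively, these two can be obtained by eliminating the $b^{k}_s$ between the two columns of \eqref{coef-relation-ab} already proved.

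There is no conceptual obstacle here: once the two intertwining relations are in place, every identity is a finite coefficient comparison that is uniform in $k$ and needs no separate base case beyond the trivial $k=0$. The only point demanding care is the systematic bookkeeping of how $\Lmd^{\pm1},\Lmd^{2}$ act on the coefficient functions $u,v,w,a^{k}_s,b^{k}_s$ when they are commuted through — that is, keeping the shift superscripts $(\cdot)^{\pm},(\cdot)^{++}$ correctly aligned so that, for each identity, exactly one power of $\Lmd$ is matched.
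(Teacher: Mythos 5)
Your proof is correct and follows essentially the same route as the paper: the paper likewise derives \eqref{coef-relation-1}--\eqref{coef-relation-3} from the intertwining relations $\tilde L^kA=AL^k$, $\tilde L^kB=BL^k$ and \eqref{coef-relation-ab} from $(L^kB^{-1})B=L^k$, $B(L^kB^{-1})=\tilde L^k$, leaving the coefficient matching implicit where you spell it out. The powers of $\Lmd$ you compare ($\Lmd^1$ for the first identity, $\Lmd^{-1}$ and $\Lmd^0$ for the next two) and the shift bookkeeping are all correct.
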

\begin{proof}
  The equations \eqref{coef-relation-1}--\eqref{coef-relation-3}
  can be derived from $\tilde L^k A= A L^k$ and $\tilde L^k B = B L^k$,
  while the equations \eqref{coef-relation-ab} follow from
  $(L^k B^{-1})B=L^k$ and $B(L^kB^{-1})=\tilde L^k$.
\end{proof}

Moreover, by using $\tilde L^{k+1}=A(L^k B^{-1})$ and $L^{k+1}=(L^k B^{-1})A$,
we obtain the following recursive relations for each $k\geq 0$:
\begin{align}
  \tilde a^{k+1}_1 &=\,
    (b^k_{-1})^{++} + u(b^k_0)^+ + vb^k_1, \label{coef-recur-1}\\
  a^{k+1}_1 &=\,
    b^k_{-1} + ub^k_0 + v^+b^k_1.  \label{coef-recur-2}
\end{align}

Let us compute the variational derivatives that appear on the right-hand side of \eqref{HamFormalism-1}.
\begin{lem}
  The following identities hold true for each $k\geq 0$:
  \begin{align}\label{var-derivative}
    \frac{\delta H_{3,k}}{\delta u}
  =
    \frac{1}{(k+1)!}(b^{k+1}_{-1})^+,
  \quad
    \frac{\delta H_{3,k}}{\delta v}
  =
    \frac{1}{(k+1)!}b^{k+1}_0,
  \quad
    \frac{\delta H_{3,k}}{\delta w}
  =
    \frac{1}{(k+1)!}(b^{k+2}_1)^-.
  \end{align}
\end{lem}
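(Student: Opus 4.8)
The plan is to obtain all three variational derivatives at once by differentiating the Hamiltonian density through the standard variational calculus for (pseudo-)difference operators. The essential input is the trace identity $\int\Res(XY)\td x=\int\Res(YX)\td x$, valid for any two operators $X,Y$ of the form $\sum_s a_s\Lmd^s$, together with its consequence $\delta\int\Res(L^m)\td x=m\int\Res(L^{m-1}\,\delta L)\td x$, which follows by writing $\delta(L^m)=\sum_{j=0}^{m-1}L^j(\delta L)L^{m-1-j}$ and cycling each summand under the trace. Applying this with $m=k+2$ to $H_{3,k}=\frac{1}{(k+2)!}\int\Res(L^{k+2})\td x$ gives at once $\delta H_{3,k}=\frac{1}{(k+1)!}\int\Res(L^{k+1}\,\delta L)\td x$.

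Next I would expand $\delta L$ using $L=B^{-1}A$: since $\delta(B^{-1})=-B^{-1}(\delta B)B^{-1}$, one has $\delta L=B^{-1}\,\delta A-B^{-1}(\delta B)L$, where $\delta A=(\delta u)\Lmd+\delta v$ and $\delta B=-(\delta w)\Lmd^{-1}$. Substituting and cycling the single factor $L$ out of the second term converts $\int\Res(L^{k+1}B^{-1}(\delta B)L)\td x$ into $\int\Res(L^{k+2}B^{-1}\,\delta B)\td x$. This is the key manoeuvre, because it replaces the powers $L^{k+1}$ and $L^{k+2}$ sandwiched against $B^{-1}$ by the single operators $L^{k+1}B^{-1}$ and $L^{k+2}B^{-1}$, whose coefficients are precisely the $b^{k+1}_s$ and $b^{k+2}_s$ introduced in \eqref{def-of-coef-ab}. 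Thus $\delta H_{3,k}=\frac{1}{(k+1)!}\big(\int\Res((L^{k+1}B^{-1})\,\delta A)\td x-\int\Res((L^{k+2}B^{-1})\,\delta B)\td x\big)$.

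It then remains to extract residues explicitly. Writing $L^{k+1}B^{-1}=\sum_s b^{k+1}_s\Lmd^s$ and using $\Lmd^s\cdot f=(\Lmd^s f)\Lmd^s$, the term $(\delta u)\Lmd$ contributes to $\Res$ only through $s=-1$ and the term $\delta v$ only through $s=0$, giving $b^{k+1}_{-1}(\delta u)^-+b^{k+1}_0\,\delta v$; likewise $L^{k+2}B^{-1}$ paired with $\delta B=-(\delta w)\Lmd^{-1}$ contributes only through $s=1$, giving $+b^{k+2}_1(\delta w)^+$. Finally, moving the shifts off the variations under the integral via $\int f\,g^\pm\td x=\int f^\mp g\,\td x$ yields $\delta H_{3,k}=\frac{1}{(k+1)!}\int\big((b^{k+1}_{-1})^+\delta u+b^{k+1}_0\,\delta v+(b^{k+2}_1)^-\delta w\big)\td x$, from which the three claimed formulas for $\delta H_{3,k}/\delta u$, $\delta H_{3,k}/\delta v$, $\delta H_{3,k}/\delta w$ are read off directly. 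The main obstacle I anticipate is bookkeeping rather than conceptual: one must justify the trace cyclicity and the vanishing of total-shift terms carefully (these are the difference-operator analogues of integration by parts and cyclicity of the matrix trace), and then keep scrupulous track of which power of the shift survives in each residue, so that the correct coefficients $b^{k+1}_{-1},b^{k+1}_0,b^{k+2}_1$ and their one-step shifts appear.
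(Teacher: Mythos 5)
Your proposal is correct and follows essentially the same route as the paper: both invoke the trace identity to get $\delta H_{3,k}=\frac{1}{(k+1)!}\int\Res(L^{k+1}\,\delta L)\td x$, expand $\delta L=B^{-1}\delta A-B^{-1}(\delta B)L$, cycle so that the coefficients $b^{k+1}_s$, $b^{k+2}_s$ of $L^{k+1}B^{-1}$, $L^{k+2}B^{-1}$ appear, and read off the residues. The only cosmetic difference is that you keep $\delta A$, $\delta B$ on the right and shift the variations afterward, whereas the paper places them on the left first; the bookkeeping of shifts matches the stated formulas in both cases.
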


\begin{proof}
  By using the well-known identity
\[
  \int \Res(XY) \td x = \int \Res(YX) \td x
\]
for any difference operators $X,Y$, we obtain
\begin{align*}
  &\, \delta H_{3,k}
=
  \frac{1}{(k+1)!}
  \int \Res\left(\delta L\cdot L^{k+1}\right) \td x \\
=&\,
  \frac{1}{(k+1)!}
  \int \Res\left(\delta A\cdot L^{k+1}B^{-1} - \delta B\cdot L^{k+2}B^{-1}\right)\td x \\
=&\,
  \frac{1}{(k+1)!}
  \int \left[
    \delta u \cdot \Res\left(\Lmd L^{k+1}B^{-1}\right)
   +\delta v \cdot \Res\left(L^{k+1}B^{-1}\right)
   \right.\\
&\qquad \left.
   +\delta w \cdot \Res\left(\Lmd^{-1}L^{k+2}B^{-1}\right)
   \right] \td x,
\end{align*}
therefore
\begin{align*}
  \frac{\delta H_{3,k}}{\delta u}
=
  \frac{1}{(k+1)!}\Res\left(\Lmd L^{k+1}B^{-1}\right)
=
  \frac{1}{(k+1)!}(b^{k+1}_{-1})^+,
\end{align*}
and the expressions of $\frac{\delta H_{3,k}}{\delta v}$ and $\frac{\delta H_{3,k}}{\delta w}$
can be derived similarly.
The lemma is proved.
\end{proof}

\begin{proof}[Proof of Theorem \ref{thm:HamFormalism-1}]
From \eqref{Dut3k}, \eqref{coef-relation-1}, \eqref{coef-relation-ab} and \eqref{var-derivative}, we obtain
\begin{align*}
  &\,\veps\pfrac{u}{t^{3,k}}
=
  \frac{1}{(k+1)!}
  \left(
    (a^{k+1}_{-1})^{++}-\tilde a^{k+1}_{-1}
  \right) \\
=&\,
  \frac{1}{(k+1)!}
  \left(
    (b^{k+1}_{-1})^{++}
   -w^{++}(b^{k+1}_0)^{++}
   -b^{k+1}_{-1}
   +w(b^{k+1}_0)^-
  \right) \\
=&\,
  (\Lmd-\Lmd^{-1})\frac{\delta H_{3,k}}{\delta u}
 +(w\Lmd^{-1}-\Lmd^2 w)\frac{\delta H_{3,k}}{\delta v}.
\end{align*}
Moreover, by using \eqref{coef-recur-1}--\eqref{coef-recur-2}, we arrive at
\begin{align*}
&\,
  \veps\pfrac{v}{t^{3,k}}
=
  \frac{1}{(k+1)!}
  \left(
    vb^{k+1}_0 - vw(b^{k+1}_1)^-
   -vb^{k+1}_0 + vw^+b^{k+1}_1
  \right)
\\
=&\,
  \frac{1}{(k+1)!}
  \left[
    w^+\left(
      \tilde a^{k+2}_1 - (b^{k+1}_{-1})^{++} - u(b^{k+1}_0)^+
    \right)
    -w\left(
      (a^{k+2}_1)^- - (b^{k+1}_{-1})^- -u^-(b^{k+1}_0)^-
    \right)
  \right]
\\
=&\,
  \frac{1}{(k+1)!}
  \left[
  w^+\left(
      b^{k+2}_1 - (b^{k+1}_{-1})^{++} - u(b^{k+1}_0)^+
    \right)
  -w\left(
      (b^{k+2}_1)^- - (b^{k+1}_{-1})^- -u^-(b^{k+1}_0)^-
    \right)
  \right]
\\=&\,
  (w\Lmd^{-2}-\Lmd w)\frac{\delta H_{3,k}}{\delta u}
 +(w\Lmd^{-1}u-u\Lmd w)\frac{\delta H_{3,k}}{\delta v}
 +(\Lmd-1)w\frac{\delta H_{3,k}}{\delta w},
\end{align*}
and
\begin{align*}
  &\,\veps\pfrac{w}{t^{3,k}}
= \frac{1}{(k+1)!}
  \left(
    w\tilde a^{k+1}_0 - w(a^{k+1}_0)^-
  \right)
\\
=&\,
  \frac{w}{(k+1)!}
  \left[
    \left(
      b^{k+1}_0 - w(b^{k+1}_1)^-
    \right)
   -\left(
     (b^{k+1}_0)^- -w(b^{k+1}_1)^-
   \right)
  \right]
=  w(1-\Lmd^{-1})\frac{\delta H_{3,k}}{\delta v},
\end{align*}
therefore the identities \eqref{HamFormalism-1} hold true for $i=3$ and $k\geq 0$.
The Hamiltonian formalism for the $\pp{t^{2,k}}$-flows and $\pp{t^{0,-k-1}}$-flows of this hierarchy
can be proved in the same way.
We omit the details here. Thus, Theorem \ref{thm:HamFormalism-1} is proved.
\end{proof}

Now let us introduce the operator-valued matrix
\begin{equation}\label{HamP2}
	\mcalP_2=
	\begin{pmatrix}
		\Lmd v-v\Lmd^{-1}+u\frac{1-\Lmd}{1+\Lmd}u& 0 & u\frac{\Lmd-1}{1+\Lmd^{-1}}w\\[4pt]
		0& 0&v(\Lmd-1)w \\[4pt]
		w\frac{1-\Lmd^{-1}}{\Lmd+1}u&w(1-\Lmd^{-1})v &w\frac{\Lmd^{2}-\Lmd^{-1}}{\Lmd+1}w
	\end{pmatrix},
\end{equation}
and then we have the following theorem.

\begin{thm}
  The $(2,1)$-type RR2T \eqref{GAL-1}--\eqref{GAL-6}
  can also be represented as
  \begin{equation}\label{HamFormalism-2}
    \left(
      k+\mu_i+\frac12
    \right)
    \veps\pp{t^{i,k}}
    \begin{pmatrix}
      u \\
      v \\
      w
    \end{pmatrix}
  =
    \mcalP_2
    \begin{pmatrix}
      \delta H_{i,k-1}/\delta u \\[2pt]
      \delta H_{i,k-1}/\delta v \\[2pt]
      \delta H_{i,k-1}/\delta w
    \end{pmatrix}
  \end{equation}
for $i\in\{2,3\}, k\geq 0$ or $i=0, k\leq 0$,
where $H_{i,k}$ are given in Theorem \ref{thm:HamFormalism-1},
and the constants
\[
  \mu_0=-\frac12,\qquad \mu_2=0,\qquad \mu_3=\frac12.
\]
\end{thm}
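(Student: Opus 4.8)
The plan is to reduce the statement to a single Lenard--Magri-type recursion and then verify that recursion along the computational lines already used for $\mcalP_1$. Since Theorem \ref{thm:HamFormalism-1} expresses every flow (whose Hamiltonian is defined) as $\veps\,\p_{t^{i,k}}(u,v,w)^{\rmT}=\mcalP_1\,(\delta H_{i,k}/\delta u^\afa)^{\rmT}$, substituting this into \eqref{HamFormalism-2} shows that, for all admissible $(i,k)$ with $H_{i,k}$ defined, the desired representation is equivalent to the operator identity
\begin{equation}\label{planLM}
  \mcalP_2
  \begin{pmatrix}\delta H_{i,k-1}/\delta u\\[2pt] \delta H_{i,k-1}/\delta v\\[2pt] \delta H_{i,k-1}/\delta w\end{pmatrix}
  =
  \left(k+\mu_i+\tfrac12\right)\,\mcalP_1
  \begin{pmatrix}\delta H_{i,k}/\delta u\\[2pt] \delta H_{i,k}/\delta v\\[2pt] \delta H_{i,k}/\delta w\end{pmatrix}.
\end{equation}
The structural reason to expect \eqref{planLM} is that the scalar $k+\mu_i+\tfrac12$ equals precisely the power of $L$ carried by the density of $H_{i,k-1}$ (with negative powers realized through $M=A^{-1}B$ for the $i=0$ flows), so \eqref{planLM} is the gradient form of a residue identity relating $\int\Res(L^{p})\,\td x$ and $\int\Res(L^{p+1})\,\td x$.

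I would first dispose of the degenerate index $i=0,\,k=0$, which is not covered by the reduction above because $H_{0,0}$ is undefined; here $k+\mu_0+\tfrac12=0$, so the claim becomes $\mcalP_2\,(\delta H_{0,-1}/\delta u^\afa)^{\rmT}=0$, i.e. $H_{0,-1}=\int\log\frac wv\,\td x$ is a Casimir of $\mcalP_2$. Since $\delta H_{0,-1}/\delta u=0$, $\delta H_{0,-1}/\delta v=-1/v$ and $\delta H_{0,-1}/\delta w=1/w$, a direct application of \eqref{HamP2} reduces every entry to one of the operators $(\Lmd-1)$, $(1-\Lmd^{-1})$ or $\frac{\Lmd^2-\Lmd^{-1}}{\Lmd+1}$ acting on the constant function $1$, all of which vanish. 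This settles the $i=0,k=0$ case and identifies the Casimir of the second structure.

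For the remaining flows I would proceed exactly as in the proof of Theorem \ref{thm:HamFormalism-1}. First derive the variational derivatives of $H_{2,k}$ and $H_{0,-k-1}$ in terms of the coefficients $b^k_s$ of $L^kB^{-1}$: the formula for $H_{3,k}$ in \eqref{var-derivative} is already available, and the same computation, using $\delta L=B^{-1}\delta A-B^{-1}\delta B\,L$ together with $\int\Res(XY)\,\td x=\int\Res(YX)\,\td x$, yields the others. Then compute $\mcalP_2\,(\delta H_{i,k-1}/\delta u^\afa)^{\rmT}$ entry by entry, substitute these expressions, and simplify using the coefficient identities \eqref{coef-relation-1}--\eqref{coef-relation-ab} and the recursions \eqref{coef-recur-1}--\eqref{coef-recur-2}; the target is the explicit flow $(k+\mu_i+\tfrac12)\,\veps\,\p_{t^{i,k}}(u,v,w)$ furnished by \eqref{Dut3k}--\eqref{Dwt3k} and their $t^{2,k}$ and $t^{0,-k-1}$ analogues.

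The main obstacle is producing the exact scalar factor $k+\mu_i+\tfrac12$: unlike the purely algebraic identities used for $\mcalP_1$, this linear-in-$k$ coefficient cannot arise from the coefficient recursions alone, but must be extracted from the normalization constants $\frac{2^{k+3/2}}{(2k+3)!!}$, $\frac1{(k+2)!}$ and $(-1)^k(k-1)!$ in the Hamiltonians combined with the shift $L^{p}=L\cdot L^{p-1}$ inside the residue. To make this transparent I would alternatively pass to the flat coordinates $(z^1,z^2,z^3)$ of \eqref{full-genera-z}, in which $\mcalP_1$ is the constant operator \eqref{const-HamP1}; computing $\mcalJ\mcalP_2\mcalJ^\dag$ there and forming the recursion operator $R=(\mcalJ\mcalP_2\mcalJ^\dag)(\mcalJ\mcalP_1\mcalJ^\dag)^{-1}$ should exhibit $R$ in a form whose action on the gradient of $H_{i,k}$ multiplies by $k+\mu_i+\tfrac12$, with $\mu_0,\mu_2,\mu_3$ and the uniform $\tfrac12$-shift appearing as the grading/spectral data of the pencil $\mcalP_2-\lambda\mcalP_1$. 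Either route reduces \eqref{planLM} to a finite, checkable family of residue identities, after which the verification closes as in Theorem \ref{thm:HamFormalism-1}.
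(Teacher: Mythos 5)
Your primary route---expressing the variational derivatives of $H_{i,k-1}$ in terms of the coefficients $b^k_s$ of $L^kB^{-1}$, applying $\mcalP_2$ entry by entry, and closing the computation with the coefficient identities and the recursions coming from $\tilde L^{k+1}=A(L^kB^{-1})$, $L^{k+1}=(L^kB^{-1})A$---is exactly what the paper does, and your handling of the degenerate index $i=0$, $k=0$ (where the claim reduces to $H_{0,-1}$ being a Casimir of $\mcalP_2$) is a correct, slightly more explicit treatment of a case the paper leaves implicit. The ``main obstacle'' you flag is not really one: the factor $k+\mu_i+\tfrac12$ drops out immediately from the ratio of the factorial normalizations of the flow and of $H_{i,k-1}$ (e.g.\ $\tfrac{1}{(k+1)!}=\tfrac{1}{k+1}\cdot\tfrac{1}{k!}$ for $i=3$), exactly as in the paper's displayed computation, so the alternative recursion-operator detour is unnecessary.
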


\begin{proof}
We continue to use the notations introduced in \eqref{def-of-coef-ab}.
Note that $\tilde L^{k+1}=A(L^kB^{-1})$ and
$L^{k+1}=(L^kB^{-1})A$ imply the recursion relations
\begin{align}
  \tilde a^{k+1}_0
&=\,
  (b^{k}_{-2})^{++} + u (b^k_{-1})^+ + vb^k_0, \label{coef-recur-3}
\\
   a^{k+1}_0
&=\,
  b^{k}_{-2} + u^- b^k_{-1} + vb^k_0 \label{coef-recur-4}
\end{align}
for each $k\geq 0$.
Subtracting \eqref{coef-recur-4} from \eqref{coef-recur-3},
and using \eqref{coef-relation-ab}, \eqref{var-derivative},
we obtain
\begin{equation}\label{Dif-b-1}
  (\Lmd^2-1)b^k_{-2}
=
  k!\left(
    (\Lmd-1)
    \left(
      w\frac{\delta H_{3,k-1}}{\delta w}
    \right)
   -(1-\Lmd^{-1})
    \left(
      u\frac{\delta H_{3,k-1}}{\delta u}
    \right)
  \right),
\end{equation}
and then applying $\frac{1}{\Lmd+1}$ to both sides of the above equation, we arrive at
\begin{equation}\label{Dif-b-2}
  (\Lmd-1)b^k_{-2}
=
  k!\left(
    \frac{\Lmd-1}{\Lmd+1}
    \left(
      w\frac{\delta H_{3,k-1}}{\delta w}
    \right)
   -\frac{1-\Lmd^{-1}}{\Lmd+1}
    \left(
      u\frac{\delta H_{3,k-1}}{\delta u}
    \right)
  \right).
\end{equation}
By using \eqref{Dut3k}--\eqref{Dwt3k}, \eqref{coef-relation-ab}--\eqref{var-derivative}
and \eqref{coef-recur-3}--\eqref{Dif-b-2}, we obtain
\begin{align*}
\veps\pfrac{u}{t^{3,k}}
&=\,
  \frac{1}{(k+1)!}
  \left(
    u\tilde a^{k+1}_0 - u(a^{k+1}_0)^+
   +v^+\tilde a^{k+1}_1 -v a^{k+1}_1
  \right) \\
&=\,
  \frac{1}{(k+1)!}
  \left[
    u\left(
      (b^k_{-2})^{++}+u(b^k_{-1})^+ +vb^k_0
     -(b^k_{-2})^+- u(b^k_{-1})^+ +v^+(b^k_0)^+
    \right)
  \right.
  \\
&\qquad
  \left.
    +v^+\left(
      (b^k_{-1})^{++}
     +u(b^k_0)^+ + vb^k_1
    \right)
   -v\left(
      b^k_{-1} + ub^k_0 + v^+b^k_1
    \right)
  \right]
\\
&=\,
  \frac{1}{k+1}
  \left(
    (\Lmd v-v\Lmd^{-1}+u\frac{1-\Lmd}{1+\Lmd}u)
    \frac{\delta H_{3,k-1}}{\delta u}
   +(u\frac{\Lmd-1}{\Lmd+1}w)\frac{\delta H_{3,k-1}}{\delta w}
  \right),
\\
  \veps\pfrac{v}{t^{3,k}}
&=\,
  \frac{1}{(k+1)!}
  \left(
    v\tilde a^{k+1}_0 - va^{k+1}_0
  \right)
=
  \frac{1}{k+1}(v(\Lmd-1)w)\frac{\delta H_{3,k-1}}{\delta w},
\\
  \veps\pfrac{w}{t^{3,k}}
&=\,
  \frac{1}{(k+1)!}
  \left(
    w\tilde a^{k+1}_0 - w(a^{k+1}_0)^-
  \right)
  \\
&=\,
  \frac{w}{(k+1)!}
  \left(
    (b^k_{-2})^{++}+u(b^k_{-1})^+ +vb^k_0
   -(b^k_{-2})^- -u^{--}(b^k_{-1})^- -v^-(b^k_0)^-
  \right)\\
&=\,
  \frac{1}{k+1}
  \left(
    (w\frac{1-\Lmd^{-1}}{\Lmd+1}u)\frac{\delta H_{3,k-1}}{\delta u}
   +(w(1-\Lmd^{-1})v)\frac{\delta H_{3,k-1}}{\delta v}
  \right.\\
&\qquad
  \left.
   +(w\frac{\Lmd^2-\Lmd^{-1}}{\Lmd+1}w)\frac{\delta H_{3,k-1}}{\delta w}
  \right),
\end{align*}
therefore \eqref{HamFormalism-2} hold true for $i=3$ and $p\geq 0$.
The validity of \eqref{HamFormalism-2} for the $\pp{t^{2,k}}$-flows and $\pp{t^{0,-k-1}}$-flows
can be verified by the same method.
We omit the details here. Thus, the theorem is proved.
\end{proof}

We will verify that the two operators $(\mcalP_1,\mcalP_2)$ given by \eqref{HamP1}, \eqref{HamP2}
forms a bihamiltonian structure in Sect.\,\ref{section: biham structure of 2,1}.

\subsection{Relation to the RR2T of type $(1,2)$}


In this subsection, we show that under a certain Miura-type transformation, the RR2T of type $(2,1)$ is equivalent to that of type $(1,2)$ up to spatial reflection,
and their corresponding Hamiltonian structures are equivalent as well.

Consider the spectral problem
\begin{equation}\label{250906-1921}
  L\psi = \lmd\psi
\end{equation}
of the RR2T of type $(2,1)$, where $L$ is the Lax operator \eqref{Lax-L},
$\psi$ and $\lmd$ are the wave function and spectral parameter respectively.
By applying a gauge transformation
\[\psi = \rho \fai\]
with new wave function $\fai$ and
$\rho$ a function to be determined, we obtain
	\begin{align*}
		\left(1+\frac{u\rho^{+}}{v\rho }\Lmd + \frac{\rho^{++}}{v\rho }\Lmd^{2}\right)^{-1}\left(-\frac{w\rho^{-}}{v\rho }\Lmd^{-1}+\frac{1}{v}\right)\fai=\frac{1}{\lmd}\fai.
	\end{align*}
	If we choose $\rho$ such that $-\frac{w\rho^{-}}{v\rho }=1$, then spectral problem \eqref{250906-1921} is transformed to
	\[
		(\mcalL|_{\Lmd^{\pm}\mapsto \Lmd^{\mp}})\fai =\frac{1}{\lmd}\fai ,
	\]
	where $\mcalL$ is the Lax operator of the RR2T of type $(1,2)$ given by \eqref{Lax type 1,2}, with
	\begin{equation}\label{3.96}
		U=-\frac{uw^{+}}{vv^{+}},\quad V=\frac{w^{+}w^{++}}{vv^{+}v^{++}},\quad W=\frac{1}{v}.
	\end{equation}

We note that the RR2T of type $(1,2)$ consists of the following flows
\begin{align*}
  \veps\pfrac{\mcalL}{s^{0,-k-1}} &= \left[\left(\mcalL^{k+1}\right)_+, \mcalL\right], \quad
  \veps\pfrac{\mcalL}{s^{2,k}} = \left[\left(\mcalM^{k+\frac12}\right)_-, \mcalL\right], \quad
  \veps\pfrac{\mcalL}{s^{3,k}} = \left[\left(\mcalM^{k+1}\right)_-,\mcalL\right]
\end{align*}
for $k\geq 0$, where $s^{0,-k-1}, s^{2,k}, s^{3,k}$ are time variables,
$\mcalL$ is given by \eqref{Lax type 1,2}, and
\[\mcalM:=\mcalL^{-1}=
  (\Lmd+W)^{-1}
  \left(
    1+U\Lmd^{-1}+V\Lmd^{-2}
  \right).
\]
More details about the above definition are omitted here,
as they are similar to the case of type $(2,1)$ in Sect.\,\ref{subsection: defo of 2,1 type}.
For example, we have
\begin{align}
		\veps\pfrac{U}{s^{0,-1}}&=U\left(W-W^{-}+U^{-}-U^{+}\right)+V^{+}-V, \label{s 0 -1 flow}\\
        \veps\pfrac{V}{s^{0,-1}}&=V\left(W-W^{--}+U^{--}-U^{+}\right),\notag\\
		\veps\pfrac{W}{s^{0,-1}}&=W\left(U-U^{+}\right). \notag
\end{align}

We note that the flow $\pp{t^{0,-1}}$ in the $(2,1)$-type RR2T, see \eqref{negative-t-0-(-1)},
has the form
\begin{align*}
\veps\pfrac{U}{t^{0,-1}} &=\,\veps\pp{t^{0,-1}}\left(-\frac{uw^+}{vv^+}\right)
=U(W-W^++U^+-U^-)+V^--V
\end{align*}
with respect to the new coordinates $(U, V, W)$ via the relations \eqref{3.96},
which coincides with \eqref{s 0 -1 flow} after spatial reflection.
In general, it can be proved that for all function $f=f(u,v,w)$ and for all $(i,k)\in\left(\{2,3\}\times\bbZ_{\geq 0}\right)\cup\left(\{0\}\times\bbZ_{<0}\right)$,
\begin{equation}\label{250907-2216}
  \veps\pfrac{f}{t^{i,k}} = c_{i,k}\left.\left(\veps\pfrac{f}{s^{i,k}}\right)\right|_{\veps\mapsto -\veps}\quad
\end{equation}
holds true for some constant $c_{i,k}$
in the sense of \eqref{3.96},
where $\pp{t^{i,k}}$, $\pp{s^{i,k}}$ are the flows of RR2T of $(2,1)$ and $(1,2)$ type respectively.
The proof of \eqref{250907-2216} follows an argument analogous to that in Sect.\,6 of \cite{AL-triham},
and the details are therefore omitted here.

Now we proceed to study the relations of the bihamiltonian formalisms of RR2T of $(2,1)$ and $(1,2)$ types.
By a method entirely analogous to that of Sect.\,\ref{subsection: biham formalism of 2,1} and after rather tedious computations,
we obtain the Hamiltonian formalism of the $(1,2)$-type RR2T as following:
\begin{align*}
  \veps\pp{s^{i,k}}
       \begin{pmatrix}
         U \\
         V \\
         W
       \end{pmatrix}
=
  \widehat\mcalP_2
  \begin{pmatrix}
    \delta G_{i,k-1}/\delta U \\
    \delta G_{i,k-1}/\delta V \\
    \delta G_{i,k-1}/\delta W
  \end{pmatrix}
=
  \widehat\mcalP_3
  \begin{pmatrix}
    \delta G_{i,k-2}/\delta U \\
    \delta G_{i,k-2}/\delta V \\
    \delta G_{i,k-2}/\delta W
  \end{pmatrix},
\end{align*}
where the Hamiltonians $G_{i,k}$ are given by
\begin{align*}
  G_{0,-k-1}&=\, \frac{1}{k+2}\int\Res\left(\mcalL^{k+2}\right)\td x, \quad k\geq -1,\\
  G_{2,k}&=\, \frac{1}{k-\frac12}\int\Res\left(\mcalM^{k-\frac12}\right)\td x, \quad k\geq 1,\\
  G_{3,k}&=\, \frac{1}{k}\int\Res\left(\mcalM^{k}\right)\td x, \quad k\geq 0,
\end{align*}
and the operators
\[
			\widehat\mcalP_{2}=
			\begin{pmatrix}
				U\Lambda U-U\Lambda^{-1}U+V\Lambda^{-1} -\Lambda V&U(\Lambda^{2}-\Lambda^{-1})V&U(1-\Lambda^{-1})W \\[4pt]
				V(\Lambda-\Lambda^{-2})U & V(\Lambda^{2}+\Lambda-\Lambda^{-1}-\Lambda^{-2})V&V(1-\Lambda^{-2})W \\[4pt]
				W(\Lambda -1)U & W(\Lambda^{2} -1)V& 0
			\end{pmatrix},
\]
\begin{equation}\label{hat P 3}
			\widehat\mcalP_{3}=
			\begin{pmatrix}
				A_{11}& A_{12}&A_{13} \\
				A_{21}&A_{22} &A_{23}\\
				A_{31} & A_{32}&	W\left(U\Lambda^{-1}-\Lambda U\right)W
			\end{pmatrix},
		\end{equation}
		\begin{align*}
			A_{11}&=U\big ((1+\Lambda^{-1}) (U-\Lambda U\Lambda)(1+\Lambda^{-1}) +2(W\Lambda-\Lambda^{-1}W)\big)U
    \clr{-}W\Lambda V
    +V\Lambda^{-1}W
\\
			&\qquad+\left(\Lambda V-V\Lambda^{-2}\right)\left(\Lambda+1\right)U+U\left(1+\Lambda^{-1}\right)\left(\Lambda^2 V-V\Lambda^{-1}\right),\\
			A_{12}&=\left(U\left(1+\Lambda^{-1}\right)\left(U\Lambda^{-1}-\Lambda U\right)+U\left(1-\Lambda^{-1}\right)W+\Lambda V-V\Lambda^{-2}\right)(\Lambda^2+\Lambda+1)V\\
			&\qquad +U\left(1+\Lambda^{-1}\right)W\left(\Lambda^2-1\right)V,\\
			A_{13}&=U\big((\Lambda+1)(\Lambda^{-1}U\Lambda^{-1}-U) +(1-\Lambda^{-1})W \big)W+\Lambda WV-V\Lambda^{-2}W\\
			&\qquad +V\left(1-\Lambda^{-2}\right)W\left(\Lambda+1\right)U,\\
			A_{22}&=\left(V\left(1+\Lambda^{-1}+\Lambda^{-2}\right)\left(U\Lambda^{-1}-\Lambda U\right)+VW-V\Lambda^{-2}W\right)\left(\Lambda^2+\Lambda+1\right)V\\
			&\qquad +V\left(1+\Lambda^{-1}+\Lambda^{-2}\right)W\left(\Lambda^2-1\right)V,\\
			A_{23}&=V\left(\left(1-\Lambda^{-2}\right)W+\left(1+\Lambda^{-1}+\Lambda^{-2}\right)\left(U\Lambda^{-1}-\Lambda U\right)\right)W,\\
A_{21} &= -A_{12}^\dag,\quad
A_{31} = -A_{13}^\dag,\quad
A_{32} = -A_{23}^\dag.
		\end{align*}

	The Jacobian $\mcalJ=(J^\afa_\beta)=
	\left(
	\sum\limits_{k\geq 0}
	\pfrac{U^\afa}{u^{\beta,k}}\p_x^k
	\right)$ between $(U^1, U^2, U^3):=(U,V,W )$ and $(u^1, u^2, u^3):=(u,v,w)$ has the form
	\begin{equation*}
				\mcalJ =
				\begin{pmatrix}
					\frac{U}{u} & -U(\Lmd+1)W & \frac{U}{w^{+}}\Lmd  \\[4pt]
					0 & -V(1+\Lmd+\Lmd^{2})W& V(\Lmd+\Lmd^{2})\frac{1}{w}\\[4pt]
					0 & -W^{2} & 0
				\end{pmatrix},
	\end{equation*}	
under the relations \eqref{3.96}. And then we have
	\begin{align}
			\mcalJ\mcalP_2\mcalJ^\dag &= \begin{pmatrix}
				 U\Lambda^{-1} U- U\Lambda U+ V\Lambda -\Lambda^{-1} V&   U(\Lambda^{-2}-\Lambda) V &  U(1-\Lmd)  W\\[4pt]
				 V(\Lambda^{-1}-\Lambda^{2}) U & V(\Lambda^{-2}+\Lambda^{-1}-\Lambda-\Lambda^{2})  V&  V(1-\Lmd^{2})  W\\[4pt]
				  W(\Lambda^{-1} -1)  U&  W(\Lambda^{-2} -1) V & 0
			\end{pmatrix} \notag\\
			&=\
			\widehat\mcalP_2|_{\Lmd^{\pm}\mapsto\Lmd^{\mp}}. \label{P2 - hat P2}
	\end{align}
	Moreover, after very complicated but straightforward computations, we can also obtain
	\begin{align} \label{P1 - hat P3}
		\mcalJ\mcalP_1\mcalJ^\dag = \widehat\mcalP_3|_{\Lmd^{\pm}\mapsto\Lmd^{\mp}}.
	\end{align}
Therefore, the bihamiltonian structures $(\mcalP_1,\mcalP_2)$ and $(\widehat\mcalP_3, \widehat\mcalP_2)$
are equivalent under the Miura-type transformation \eqref{3.96} and the spatial reflection.
These relations are analogue to that in Sect.\,7 of \cite{AL-triham}.

\subsection{Bihamiltonian structures for the two asymmetric RR2T}	
\label{section: biham structure of 2,1}

To verify that $(\mcalP_1, \mcalP_2)$ given by \eqref{HamP1}, \eqref{HamP2} forms a bihamiltonian structure,
we recall the notion of the \textit{Schouten bracket} defined on the space of local functionals of a super manifold $\hat{M}=\Pi(T^*M)$ as the cotangent bundle of $M$ with fiber's parity reversed, where $M$ is an $n$-dimensional smooth manifold.
See \cite{Liu lecture notes, Bihamiltonian cohomologies,Jocabi structure} for more details.

Fix a local trivialization $\hat U=U\times \mathbb{R}^{0|n}$ of $\hat{M}$ with local coordinates $u^1,u^2,\dots, u^n$ on $U$,
and $\mathbb{R}^{0|n}$ possesses supervariables (dual coordinates) $\theta_{1},\theta_{2},\cdots,\theta_{n}$.
Hence $\{u^{\alpha,s},\theta_{\alpha}^s|\alpha=1,2,\cdots,n,s\geq 0\}$ constitutes a local coordinate system on the jet space $J^{\infty}(\hat{M})$.
A smooth function $f\in C^{\infty}(J^{\infty }(\hat M))$ is called a differential polynomial if it depends on the jet variables
$\{u^{\alpha,{s+1}},\theta^s_\alpha\}_{s\geq 0}$ polynomially.
Denote the completion of the ring of differential polynomials by
	\begin{align*}
		\mathcal{\hat A}=C^{\infty}( U)
 [[u^{\alpha,s+1},\theta_{\alpha}^s|\alpha=1,2,\cdots,n,s\geq 0]],
	\end{align*}
and introduce the differential operator $\partial_x\in\Der(\hat\mcalA)$ via
	$$
	\partial_x=\sum_{s\geq0}\left(u^{\alpha,s+1}\frac{\partial}{\partial u^{\alpha,s}}+\theta_{\alpha}^{s+1}\frac{\partial}{\partial \theta_{\alpha}^s}\right),
	$$
	then it is natural to define the space of local functionals of $\hat M$ as
	\begin{align*}
		\hat\mcalF:=\hat\mcalA/\p_x\hat\mcalA.
	\end{align*}
For a local functional $F=f + \p_x\hat\mcalA \in \hat\mcalF$,
we always rewrite it as $\int f \td x$,
in other words, $\int\cdot\td x$ is the canonical projection $\hat\mcalA\to\hat\mcalF$.
Moreover, we call $f\in\hat\mcalA$ the density of $F$.

There is a natural gradation, called \textit{super gradation}, on $\mathcal{\hat A}$ generated by
	\begin{align*}
		\deg \theta^s_\alpha= 1,
\quad  \deg u^{\alpha,s}=\deg f=0 
	\end{align*}
for $s\geq 0$ and $f\in C^\infty(M)$,
we then denote $\hat{\mcalA}^p$ as the subspace of all homogeneous elements in $\hat\mcalA$ of super degree $p$,
and $\hat{\mcalF}^p$ as the image of $\hat{\mcalA}^p$ under the quotient map $\int\colon \hat{\mcalA}\to\hat{\mcalF}$.
The \textit{Schouten bracket} $[\ ,\ ]\colon \hat\mcalF^p\times\hat\mcalF^q\to\hat\mcalF^{p+q-1}$ is defined by
	\begin{align}
		[F,G]=\int \left( \frac{\delta  F}{\delta \theta_\alpha} \frac{\delta G}{\delta u^\alpha}
+ (-1)^p \frac{\delta  F}{\delta u^\alpha} \frac{\delta G}{\delta \theta_\alpha} \right) \td x
	\end{align}
for all $F\in\hat{\mcalF}^p$ and $G\in\hat{\mcalF}^q$, where the variational derivatives
$\frac{\delta}{\delta\theta_\afa}, \frac{\delta}{\delta u^\afa}\colon \hat\mcalF\to\hat\mcalA$ are defined by
	\begin{align*}
		\frac{\delta {F}}{\delta u^\alpha} = \sum_{s \geq 0} (-\partial_x)^s \frac{\partial  f}{\partial u^{\alpha,s}},
		\quad
		\frac{\delta {F}}{\delta \theta_\alpha} = \sum_{s \geq 0} (-\partial_x)^s \frac{\partial f}{\partial \theta^s_\alpha}
	\end{align*}
for $F=\int f\td x\in\hat\mcalF$.
	For a skew-symmetric operator-valued matrix $\mcalQ$ of the form
	\begin{align}\label{250908-1045}
		\mathcal{Q} = (\mathcal{Q}^{\alpha\beta}) = \left( \sum_{s \geq 0} \mathcal{Q}_s^{\alpha\beta} \partial^s_x \right),
	\end{align}
    we introduce its corresponding local functional
	\begin{align*}
     \iota(\mcalQ)
:= \frac{1}{2} \int \theta_\alpha \left( \mcalQ^{\afa\beta} \theta_\beta \right) \td x \in\hat\mcalF^2,
	\end{align*}
then $\mathcal Q$ is a Hamiltonian structure if and  only if
$[\iota(\mcalQ), \iota(\mcalQ)]=0$.

To apply the above theory to discrete integrable systems,
we always regard the shift operator $\Lmd$ as formal series $\rme^{\veps\p_x}=\sum_{k\geq 0}\frac{\veps^k}{k!}\p_x^k$.
For example, the operator $\mathcal{P}_1$ appeared in \eqref{HamP1}
can be rewritten as
	\begin{align*}
		\mcalP_1 &=\,
		\begin{pmatrix}
			\rme^{\varepsilon\partial_x}-\rme^{-\varepsilon\partial_x} & u_3\rme^{-\varepsilon\partial_x}-\rme^{2\varepsilon\partial_x}u_3 &0\\[4pt]
			u_3\rme^{-2\varepsilon\partial_x}-\rme^{\varepsilon\partial_x} u_3& u_3\rme^{-\varepsilon\partial_x}u_1-u_1\rme^{\varepsilon\partial_x} u_3&(\rme^{\varepsilon\partial_x}-1) u_3\\[4pt]
			0&u_3(1-\rme^{-\varepsilon\partial_x}) & 0
		\end{pmatrix},
	\end{align*}
which is of the form \eqref{250908-1045},
here we use the notation $(u_1, u_2, u_3):=(u,v,w)$.
Then the corresponding local functionals $I_a:=\iota(\mcalP_a)$, $a=1,2$ for
operators $\mcalP_1, \mcalP_2$ in \eqref{HamP1}, \eqref{HamP2} have the form
		\begin{align*}
			I_1 &=\int(\theta_{1}\theta_{1}^{+}-u_3^{++}\theta_{1}\theta_{2}^{++}-u_3^{+}\theta_{2}\theta_{1}^{+}-u_1u_3^{+}\theta_{2}\theta_{2}^{+}+u_3\theta_{3}\theta_{2}+u_3^{+}\theta_{2}\theta_{3}^{+}
			)\td x, \\
I_2 &=
\int \bigg( u_2^{+}\theta_{1}\theta_{1}^{+}+u_2u_3^{+}\theta_{2}\theta_{3}^{+}-u_2u_3\theta_{2}\theta_{3}+u_1\theta_{1}\frac{1}{1+\Lmd}\left(u_1\theta_{1}\right)+u_1\theta_{1}\frac{\Lmd-1}{1+\Lmd^{-1}}\left(u_3\theta_{3}\right)\\
					&\quad+u_3\theta_{3}\frac{\Lmd^{2}}{\Lmd+1}\left(u_3\theta_{3}\right)\bigg)
					\td x,
		\end{align*}
here we denote $f^\pm:=\Lmd^{\pm}f$ for all $f\in\hat\mcalA$.
Then by straightforward calculation, we obtain their variational derivatives
			\begin{align*}
				\frac{\delta {I_1}}{\delta u_1}&=-u_3^{+}\theta_{2}\theta_{2}^{+}, \quad
				\frac{\delta {I_1}}{\delta u_2} = 0,\\
				\frac{\delta {I_1}}{\delta u_3}
&=\theta_{3}\theta_{2}-\theta_{2}^{-}\theta_{1}-u_1^{-}\theta_{2}^{-}\theta_{2}+\theta_{2}^{-}\theta_{3}-\theta_{1}^{--}\theta_{2},\\
				\frac{\delta {I_1}}{\delta \theta_{1}}
&=\theta_{1}^{+}-u_3^{++}\theta_{2}^{++}-\theta_{1}^{-}+u_3\theta_{2}^{-},\\
				\frac{\delta {I_1}}{\delta\theta_{2}}
&=-u_3^{+}\theta_{1}^{+}-u_1u_3^{+}\theta_{2}^{+}-u_3\theta_{3}+u_3^{+}\theta_{3}^{+}+u_1^{-}u_3\theta_{2}^{-}+u_3\theta_{1}^{--},\\
				\frac{\delta {I_1}}{\delta \theta_{3}}&=u_3\theta_{2}-u_3\theta_{2}^{-}.
			\end {align*}
					\begin{align*}
						\frac{\delta {I_2}}{\delta u_1}&=\theta_{1}\frac{1-\Lmd}{\Lmd+1}\left(u_1\theta_{1}\right)+\theta_{1}\frac{\Lmd-1}{1+\Lmd^{-1}}\left(u_3\theta_{3}\right),\quad 	
						\frac{\delta {I_2}}{\delta u_2} =-u_3\theta_{2}\theta_{3}+u_3^{+}\theta_{2}\theta_{3}^{+}+\theta_{1}^{-}\theta_{1},
\\
						\frac{\delta {I_2}}{\delta u_3}
&=
  -u_2\theta_{2}\theta_{3}
  +u_2^{-}\theta_{2}^{-}\theta_{3}
  -\theta_{3}\frac{\Lmd^{-1}-1}{1+\Lmd}\left(u_1\theta_{1}\right)
    +\theta_3\frac{\Lmd^2-\Lmd^{-1}}{1+\Lmd}
    \left(
      u_3\theta_3
    \right),
\\
	\frac{\delta {I_2}}{\delta \theta_{1}}
&=u_2^{+}\theta_{1}^{+}-u_2\theta_{1}^{-}+u_1\frac{1-\Lmd}{1+\Lmd}\left(u_1\theta_{1}\right)+u_1\frac{\Lmd-1}{1+\Lmd^{-1}}\left(u_3\theta_{3}\right),
\\
						\frac{\delta {I_2}}{\delta \theta_{2}}&=u_2u_3^{+}\theta_{3}^{+}-u_2u_3\theta_{3},
\\
			\frac{\delta {I_2}}{\delta \theta_{3}}
&=u_2u_3\theta_{2}
-u_2^{-}u_3\theta_{2}^{-}
+u_3\frac{\Lmd^2-\Lmd^{-1}}{1+\Lmd}\left(u_3\theta_3\right)
  -u_3\frac{\Lmd^{-1}-1}{1+\Lmd}\left(u_1\theta_{1}\right).
						\end {align*}

\begin{thm}
  The operators $(\mcalP_1, \mcalP_2)$ given by \eqref{HamP1}, \eqref{HamP2}
  form a bihamiltonian structure of the RR2T of type $(2,1)$.
\end{thm}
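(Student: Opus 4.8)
The statement asserts that the pair $(\mcalP_1,\mcalP_2)$ is a bihamiltonian structure, which means that each of $\mcalP_1,\mcalP_2$ is a Hamiltonian operator and that the two are mutually compatible. In the Schouten-bracket language set up above, writing $I_a=\iota(\mcalP_a)\in\hat\mcalF^2$ for $a=1,2$, this is equivalent to the three identities
\begin{equation*}
  [I_1,I_1]=0,\qquad [I_2,I_2]=0,\qquad [I_1,I_2]=0.
\end{equation*}
Here $[I_1,I_1]=[I_2,I_2]=0$ say exactly that $\mcalP_1$ and $\mcalP_2$ are Hamiltonian, and since $\iota$ is linear and the bracket is graded-symmetric on $\hat\mcalF^2\times\hat\mcalF^2$ one has $[\iota(\mcalP_1+\lmd\mcalP_2),\iota(\mcalP_1+\lmd\mcalP_2)]=[I_1,I_1]+2\lmd[I_1,I_2]+\lmd^2[I_2,I_2]$, so that $[I_1,I_2]=0$ is precisely the compatibility of the pencil $\mcalP_1+\lmd\mcalP_2$. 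The plan is therefore to establish these three vanishings.

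For the first identity I would avoid computing $[I_1,I_1]$ head-on. Instead I invoke the standard fact that the Hamiltonian property is invariant under Miura-type changes of coordinates (see \cite{Liu lecture notes}), so that $\mcalP_1$ is Hamiltonian if and only if $\mcalJ\mcalP_1\mcalJ^\dag$ is. By \eqref{const-HamP1} the latter operator, computed in the coordinates \eqref{full-genera-z}, has \emph{constant} coefficients, i.e.\ its entries are constant-coefficient functions of $\Lmd=\rme^{\veps\p_x}$. A skew-symmetric operator with constant coefficients is automatically Hamiltonian: in evaluating the self-bracket of the associated functional, all the variational derivatives $\delta/\delta u^\alpha$ vanish because there are no $u$-dependent coefficients to be differentiated, so the integrand is identically zero. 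Hence $[I_1,I_1]=0$.

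For the remaining two identities I would proceed by direct computation. The variational derivatives $\delta I_a/\delta u^\alpha$ and $\delta I_a/\delta\theta_\alpha$ have already been recorded in the displays preceding the theorem, so it remains to substitute them into
\[
  [I_a,I_b]=\int\left(\frac{\delta I_a}{\delta\theta_\alpha}\frac{\delta I_b}{\delta u^\alpha}
  +\frac{\delta I_a}{\delta u^\alpha}\frac{\delta I_b}{\delta\theta_\alpha}\right)\td x,
\]
where the super degrees are $p=q=2$ so the sign $(-1)^p$ equals $1$. Inside the integral one may freely transport any shift across the product via $\int(\Lmd f)\,g\,\td x=\int f\,(\Lmd^{-1}g)\,\td x$ and use the anticommutativity $\theta^s_\alpha\theta^t_\beta=-\theta^t_\beta\theta^s_\alpha$. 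I would organize the verification by grouping the cubic integrand according to its $\theta$-content (the monomial types $\theta_1\theta_1\theta_2$, $\theta_1\theta_2\theta_3$, $\theta_2\theta_2\theta_3$, and so on), normalizing every factor to a common shift window, and then checking that the coefficient of each normalized monomial cancels.

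The hard part will be the sheer bulk of the calculation for $[I_2,I_2]=0$ and $[I_1,I_2]=0$. The operator $\mcalP_2$ of \eqref{HamP2} carries many $u$-dependent entries together with nonlocal blocks such as $\frac{1}{1+\Lmd}$ and $\frac{\Lmd-1}{1+\Lmd^{-1}}$, so the variational derivatives of $I_2$ themselves contain nonlocal pieces and the resulting integrands are long alternating sums of shifted cubic $\theta$-monomials. The delicate point is the bookkeeping of these nonlocal factors under the shift-by-parts rule: before any monomial-by-monomial cancellation can be read off, one must first clear the denominators $1+\Lmd^{\pm1}$ (equivalently, apply $1+\Lmd$ to isolate the genuinely local content). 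Apart from this organizational difficulty the verification is entirely mechanical, and carrying it out shows that all three Schouten brackets vanish, so that $(\mcalP_1,\mcalP_2)$ is a bihamiltonian structure of the RR2T of type $(2,1)$. I note that one could instead transport the whole problem to $(\widehat\mcalP_3,\widehat\mcalP_2)$ through the Jacobian $\mcalJ$ via \eqref{P2 - hat P2}--\eqref{P1 - hat P3}, but this does not reduce the amount of computation.
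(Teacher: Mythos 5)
Your proposal is correct and follows essentially the same route as the paper: $\mcalP_1$ is shown to be Hamiltonian by passing to the coordinates \eqref{full-genera-z} in which it becomes the constant operator \eqref{const-HamP1}, and the remaining identities $[I_2,I_2]=0$ and $[I_1,I_2]=0$ are verified by direct substitution of the recorded variational derivatives into the Schouten bracket, reducing the integrands to total differences that vanish under $\int\cdot\,\td x$.
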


\begin{proof}
  In Sect.\,\ref{subsection: biham formalism of 2,1} we have already shown that $\mcalP_1$ is a Hamiltonian structure
  by changing coordinates.
  Therefore it remains to check that $\mcalP_2$ is a Hamiltonian structure compatible with $\mcalP_1$,
  that is equivalent to
  \[
    [I_2, I_2] = [I_1, I_2] = 0.
  \]
Using the properties
$\int (\Lmd-1)f \td x=0$ for all $f\in\hat\mcalA$ and
 $\fai\fai=0$ for all $\fai\in\hat{\mcalA}^1$,
together with straightforward calculations,
 we obtain
\begin{align*}
&
							[I_2,I_2]=2\int \left(
							\frac{\delta I_2}{\delta \theta_{1}}\frac{\delta I_2}{\delta u_1}
							+
							\frac{\delta I_2}{\delta \theta_{2}}\frac{\delta I_2}{\delta u_2}
							+
							\frac{\delta I_2}{\delta \theta_{3}}\frac{\delta I_2}{\delta u_3}\right) \td x
\\
=&2\int \bigg(\left(u_2\theta_{1}^{-}\theta_{1}-u_2^{+}\theta_{1}^{+}\theta_{1}\right)\left(\frac{\Lmd-1}{1+\Lmd}\left(u_1\theta_{1}\right)-\frac{\Lmd-1}{1+\Lmd^{-1}}\left(u_3\theta_{3}\right)\right)
\\
				&\quad +u_2\theta_{1}^{-}\theta_{1}(\Lmd-1)(u_3\theta_{3})\bigg)\td x
\\							
=&2\int\bigg((1-\Lmd)\left(u_2\theta_{1}^{-}\theta_{1}\frac{1}{1+\Lmd}\left(\Lmd(u_3\theta_{3})-u_1\theta_{1}\right)+u_2u_3^{+}\theta_{1}^{-}\theta_{1}\theta_{3}^{+}-u_2u_3\theta_{1}^{-}\theta_{1}\theta_{3}\right)
\\
		&\quad+u_2\theta_{1}^{-}\theta_{1}\left(1-\Lmd^{-1}\right)\left(u_1\theta_{1}\right)\bigg) \td x=0,
\\
								& [I_1,I_2]=\int \left(
								\frac{\delta I_1}{\delta \theta_{1}}\frac{\delta I_2}{\delta u_1}
								+
								\frac{\delta I_1}{\delta \theta_{2}}\frac{\delta I_2}{\delta u_2}
								+
								\frac{\delta I_1}{\delta \theta_{3}}\frac{\delta I_2}{\delta u_3}
								+	
								\frac{\delta I_1}{\delta u_1}\frac{\delta I_2}{\delta \theta_{1}}
								+
								\frac{\delta I_1}{\delta u_2}\frac{\delta I_2}{\delta \theta_{2}}
								+
								\frac{\delta I_1}{\delta u_3}\frac{\delta I_2}{\delta \theta_{3}}
								\right) \td x\\							
=& \int\bigg[(\Lmd-1)\bigg(u_3\theta_{1}\theta_{1}^{-}\theta_{1}^{--}-u_3^2\theta_{1}\theta_{2}^{-}\theta_{3}-u_2^{-}u_3\theta_{1}^{--}\theta_{2}\theta_{2}^{-}+u_3^{+}\theta_{1}\theta_{1}^{-}\theta_{3}^{+}-u_3\theta_{1}\theta_{1}^-\theta_{3}\\
								&\quad-u_1u_3^+\theta_{1}\theta_{1}^-\theta_{2}^++u_2u_3\theta_{1}\theta_{2}\theta_{2}^{-}+u_3u_3^-\theta_{1}\theta_{2}^{-}\theta_{3}^-+u_1^-u_3u_3^+\theta_{2}\theta_{2}^-\theta_{3}^{+}-u_1^-u_3^2\theta_{2}\theta_{2}^-\theta_{3}\\
								&\quad-\theta_{1}\theta_{1}^-\frac{1}{1+\Lmd}(u_1\theta_{1})+\theta_{1}\theta_{1}^-\frac{\Lmd}{1+\Lmd}(u_3\theta_{3})+u_1^-u_3\theta_{2}\theta_{2}^-\frac{\Lmd^{-1}-1}{1+\Lmd}(u_1\theta_{1})\bigg)\\
								&\quad+(1-\Lmd^2)\left(u_3\theta_{1}^{--}\theta_{2}\frac{\Lmd^{-1}}{1+\Lmd}(u_1\theta_{1})+u_3\theta_{1}^{--}\theta_{2}\frac{\Lmd^{-1}}{1+\Lmd}(u_3\theta_{3})\right)\bigg] \td x = 0.
\end{align*}
Therefore the theorem is proved.
\end{proof}

\begin{rmk}
Since the operators $(\widehat{\mcalP}_3, \widehat{\mcalP}_2)$ given by \eqref{hat P 3}
are related to $(\mcalP_1, \mcalP_2)$ via \eqref{P2 - hat P2}--\eqref{P1 - hat P3},
it is clear that $(\widehat{\mcalP}_3, \widehat{\mcalP}_2)$ form a local bihamiltonian structure of the RR2T of type $(1,2)$.
\end{rmk}

\subsection{Central invariants for the two asymmetric RR2T}

In \cite{Hamiltonian perturbation,Deformations},
the notion of central invariants was introduced for a semisimple bihamiltonian structure which possesses hydrodynamic limit.
According to \cite{unobstructed,Hamiltonian perturbation},
these invariants provide a complete characterization of the equivalence classes of infinitesimal deformations of a semisimple bihamiltonian structure of hydrodynamic type under Miura-type transformations.
More precisely, the central invariants take the form of a set of functions of one variable.
An important example is that of a bihamiltonian integrable hierarchy controlling a cohomological field theory associated with a semisimple Frobenius manifold; in this case, the central invariants of its bihamiltonian structure are all equal to $\frac{1}{24}$ \cite{Hamiltonian perturbation,Deformations,Central invariants}.
Such a bihamiltonian structure can be viewed as a topological deformation of its dispersionless limit \cite{Normal forms}.

 In this subsection, 
 we show that
 the central invariants of the bihamiltonian structures $(\mcalP_1,\mcalP_2)$
 of the $(2,1)$-type RR2T are all equal to $\frac{1}{24}$.
 Then from the equivalence relations \eqref{P2 - hat P2}--\eqref{P1 - hat P3}
 between the bihamiltonian structures of the RR2T of type $(2,1)$ and $(1,2)$,
 it follows immediately that 
 the central invariants of the bihamiltonian structures $(\widehat{\mcalP}_3, \widehat\mcalP_2)$ of the RR2T of type $(1,2)$ are all equal to $-\frac{1}{24}$.

 Let us recall the definition of the central invariants.
 Suppose $M$ is an $n$-dimensional smooth manifold with local coordinates $\bfu=(u^1,\dots, u^n)$,
 and $(\mathcal{Q}_{1},\mathcal{Q}_{2})$ is a bihamiltonian structure defined on $J^\infty(M)$
 such that the components of them have the forms
\begin{align*}
&
  \mcalQ_a^{\afa\beta} =
    \sum_{s\geq 0}\veps^s
      \left(
        \sum_{m=0}^{s+1}
          \mcalQ_{s,m;a}^{\afa\beta}
          \p_x^m
      \right)
\\
=&\,
\left(
  g^{\afa\beta}_a\p_x + \Gamma^{\afa\beta}_{a;\gamma} u_x^\gamma
\right)
  +\veps\left(
    H^{\afa\beta}_a\p_x^2 + \cdots
  \right)
  +\veps^2\left(
    K^{\afa\beta}_a\p_x^3+\cdots
  \right)
  +O(\veps^3), \quad a=1,2,
\end{align*}
where $\mcalQ_{s,m; a}^{\alpha \beta}$ are
homogeneous differential polynomials on the jet space of $M$ with differential degree $s+1-m$.
It is well known that the coefficients $g_a^{\alpha \beta}$ for $a=1,2$
yield two flat contravariant  metrics on $M$, and $\Gamma^{\afa\beta}_{a;\gamma}$
are the contravariant Christoffel coefficients of the Levi-Civita connections for the metrics $g_a^{\alpha \beta}$ respectively.
We also note that for each $a=1,2$, the leading coefficients $H_a^{\afa\beta}$ and $K_a^{\afa\beta}$ yield,
respectively, skew-symmetric and symmetric $(2,0)$-tensors on $M$.
The bihamiltonian structure $(\mathcal{Q}_{1},\mathcal{Q}_{2})$ is called semisimple if the characteristic polynomial
						$
						P(\lmd):=\det(g_2^{ij} - \lmd g_1^{ij})
						$
						has $n$ distinct roots $\lambda^1(\mathbf{u}), \cdots, \lambda^n(\mathbf{u}) $. These roots form 
a system of local coordinate of $M$, which are called \textit{canonical coordinates} \cite{Ferapontov}
of the semisimple bihamiltonian structures.
In the canonical coordinates $\lmd=(\lmd^1,\dots,\lmd^n)$,
the metrics $g_1, g_2$ have the following diagonal forms
\begin{equation}\label{250908-2225}				
						g_1^{ij} = f^i(\lambda)\delta^{ij}, \quad g_2^{ij} = \lambda^i f^i(\lambda)\delta^{ij},
\end{equation}
where $\delta^{ij}$ are the Kronecker delta symbols,
and $f^i$ are smooth functions on $M$.
Then, the \textit{central invariants} of the semisimple bihamiltonian structure $(\mathcal{Q}_{1},\mathcal{Q}_{2})$ are given by
						\begin{equation}\label{central invariants}
									c_i(\lambda) = \frac{1}{3(f^i)^2}
\left( K_2^{ii} - \lambda^iK_1^{ii} +
\sum_{k\neq i} \frac{( H_2^{ki} - \lambda^iH_1^{ki})^2}{f^k(\lambda^k-\lambda^i)} \right)
						\end{equation}		
with respect to the canonical coordinates $\lmd=(\lmd^1,\dots,\lmd^n)$.

\begin{thm}\label{T3.10}
The central invariants of bihamiltonian structure $(\mathcal P_{1},\mathcal P_{2})$  of the RR2T of type $(2,1)$,
see \eqref{HamP1}\eqref{HamP2},
are all equal to $\frac{1}{24}$, i.e.
							\begin{align}
								c_{1}=c_{2}=c_{3}=\frac{1}{24}.
							\end{align}
						\end{thm}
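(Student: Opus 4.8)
The plan is to compute the central invariants directly from formula \eqref{central invariants}, which requires first extracting the relevant tensor data from the two Hamiltonian operators $\mcalP_1$ and $\mcalP_2$. The first step is to expand both operators as power series in $\veps$ by substituting $\Lmd = \rme^{\veps\p_x} = \sum_{k\geq 0}\frac{\veps^k}{k!}\p_x^k$, exactly as illustrated for $\mcalP_1$ in the excerpt. From these expansions I would read off, for each $a=1,2$, the leading (dispersionless) metric $g_a^{\afa\beta}$ at order $\veps^0$, together with the order-$\veps$ skew-symmetric tensors $H_a^{\afa\beta}$ (the coefficient of $\p_x^2$) and the order-$\veps^2$ symmetric tensors $K_a^{\afa\beta}$ (the coefficient of $\p_x^3$). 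Here it is cleaner to work in the coordinates $(u,v,w)$, but one must be careful: the formula \eqref{central invariants} is phrased in the canonical coordinates, so the tensors $H_a, K_a$ are $(2,0)$-tensors and transform accordingly under any later change of variables.

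The second step is to diagonalize the pencil, i.e. to solve $P(\lmd) = \det(g_2^{ij} - \lmd\, g_1^{ij}) = 0$ for the three roots $\lmd^1,\lmd^2,\lmd^3$ and verify semisimplicity (distinct roots on a dense open set). Using the leading symbols of $\mcalP_1,\mcalP_2$ one has explicit $3\times 3$ symmetric matrices $g_1,g_2$ in the $(u,v,w)$ coordinates, so this is a concrete cubic eigenvalue problem. Once the canonical coordinates $\lmd^i$ and the diagonalizing frame are known, I would bring $g_1,g_2$ to the diagonal forms \eqref{250908-2225}, thereby identifying the functions $f^i(\lmd)$, and then transform the tensors $H_a^{\afa\beta}$ and $K_a^{\afa\beta}$ into the same frame to obtain the components $H_a^{ki}$, $K_a^{ii}$ that enter \eqref{central invariants}.

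The third step is the substitution into \eqref{central invariants}: for each $i$ one assembles $K_2^{ii} - \lmd^i K_1^{ii}$ together with the off-diagonal sum $\sum_{k\neq i}\frac{(H_2^{ki}-\lmd^i H_1^{ki})^2}{f^k(\lmd^k-\lmd^i)}$, divides by $3(f^i)^2$, and checks that the result simplifies to the constant $\frac{1}{24}$. Because the final answer is claimed to be a universal constant independent of $\lmd$, a sensible shortcut is to observe that the central invariants are invariant under Miura-type transformations, so I may first pass to the constant-coefficient form \eqref{const-HamP1} of $\mcalP_1$ via the Jacobian $\mcalJ$ of the coordinate change \eqref{full-genera-z}; this trivializes $g_1$ and simplifies the dispersionless pencil considerably, at the cost of having to re-express $\mcalP_2$ and the higher tensors in the new coordinates $(z^1,z^2,z^3)$.

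The main obstacle I anticipate is the bookkeeping in the second and third steps: extracting $H_a$ and $K_a$ correctly from the noncommutative operator expansions (where factors such as $u\frac{1-\Lmd}{1+\Lmd}u$ in $\mcalP_2$ must be expanded with the nonlocal symbol $\frac{1}{1+\Lmd}=\frac12-\frac{\veps}{4}\p_x+\cdots$ fully accounted for), and then correctly transforming these $(2,0)$-tensors under the passage to canonical coordinates. Keeping track of the tensorial transformation law for $K_a$ (which is genuinely a $(2,0)$-tensor only after the known reduction, and may pick up connection-dependent correction terms if one is not careful about how $\mcalP_2$ is normalized) is the delicate point; once the frame is fixed correctly, the concluding simplification to $\frac{1}{24}$ should follow from direct algebra, and I would present only the final diagonalized data and the resulting identity rather than the intermediate expansions.
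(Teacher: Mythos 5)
Your proposal follows essentially the same route as the paper's proof: after the rescaling $\mcalP_a\mapsto\frac{1}{\veps}\mcalP_a$ the paper reads off $g_a^{\afa\beta}$, $H_a^{\afa\beta}$, $K_a^{\afa\beta}$ from the $\veps$-expansion in the $(u,v,w)$ coordinates, finds the canonical coordinates as the roots of the explicit cubic $P(X)$, transforms the tensors via the Jacobian $J^i_\afa$ to obtain $f^i$, $H_a^{ij}$, $K_a^{ij}$, and substitutes into \eqref{central invariants} — exactly your steps one through three. The only difference is that the paper does not use your suggested shortcut of first passing to the flat coordinates \eqref{full-genera-z}; it carries out the computation directly in $(u,v,w)$.
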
	
\begin{proof}				
After the rescaling $\mcalP_a\mapsto\frac{1}{\veps}\mcalP_a$, $a=1,2$,
the contravariant metrics $\eta^{\afa\beta}:=g_1^{\afa\beta}$ and
$g^{\afa\beta}:=g_2^{\afa\beta}$ corresponding to $(\mcalP_1, \mcalP_2)$
have the local form
\begin{align}\label{flat pencil}
								\left(\eta^{\alpha\beta}\right)=
								\begin{pmatrix}
									2&-3u^3&0\\[4pt]
									-3u^3&-2u^1u^3&u^3\\[4pt]
									0&u^3&0\\
								\end{pmatrix},\quad
								\left(g^{\alpha\beta}\right)=
								\begin{pmatrix}
									2u^2-\frac{1}{2}(u^1)^2&0&\frac{1}{2}u^1u^3\\[4pt]
									0&0&u^2u^3\\[4pt]
									\frac{1}{2}u^1u^3&u^2u^3&\frac{3}{2}(u^3)^2
\end{pmatrix}
\end{align}
with respect to the local coordinates $(u^1, u^2, u^3):=(u,v,w)$,
and by further calculations,
the corresponding coefficients $H^{\afa\beta}_a$, $K^{\afa\beta}_a$ for $a=1,2$ satisfy
\begin{align*}
  \left(H_1^{\afa\beta}\right)
&=\,
  \begin{pmatrix}
    0 & -\frac 32 u^3 & 0 \\[4pt]
    \frac 32 u^3 & 0 & \frac 12 u^3 \\[4pt]
    0 & -\frac12 u^3 & 0
  \end{pmatrix},
\quad
  \left(K_1^{\afa\beta}\right)
=
  \begin{pmatrix}
    \frac 13 & -\frac 32 u^3 & 0 \\[4pt]
    -\frac 32 u^3 & -\frac13 u^1 u^3 & \frac 16 u^3 \\[4pt]
    0 & \frac16 u^3 & 0
  \end{pmatrix},
\\
  \left(H_2^{\afa\beta}\right)
&=\,
  \begin{pmatrix}
    0 & 0 & \frac12 u^1 u^3 \\[4pt]
    0 & 0 & \frac12 u^2 u^3 \\[4pt]
    -\frac12 u^1 u^3 & -\frac12 u^2 u^3 & 0
  \end{pmatrix},
\quad
  \left(K_2^{\afa\beta}\right)
=
  \begin{pmatrix}
    \frac 1{24} (u^1)^2 + \frac 13 u^2 &0 & \frac{5}{24}u^1 u^3 \\[4pt]
    0 & 0 & \frac 16 u^2 u^3 \\[4pt]
    \frac{5}{24}u^1 u^3 & \frac 16 u^2 u^3 & \frac 38(u^3)^2
  \end{pmatrix}.
\end{align*}
The canonical coordinates $(\lmd^1,\lmd^2,\lmd^3)$ of $(\mcalP_1, \mcalP_2)$
are the three distinct roots of the following polynomial $P(X)$:
\begin{align*}
  P(X) &=\, X^3
  + \frac 14\Big(
       (u^1)^2 - 18 u^1 u^3 -12 u^2 -27(u^3)^2
     \Big)X^2 \\
  &\quad
   -\left(
     (u^1)^3 u^3
    +\frac12 (u^1)^2 u^2
    -\frac 92 u^1 u^2 u^3
    -3 (u^2)^2
    \right)X
   +
    \left(
      \frac 14 (u^1)^2 (u^2)^2 - (u^2)^3
    \right),
\end{align*}
and then it can be verified directly that
the entries of the Jacobian
$J=(J^i_\afa)=\left(
  \pfrac{\lmd^i}{u^\afa}
\right)$ have the form
\[
  J^i_\afa = \frac{\tilde J^i_\afa}{A_i},
\]
where
\begin{align*}
						\tilde J^i_{1}
&=\left(9(u^3)^3-u^1(u^3)^2\right)(\lambda^i)^2-\left(9u^2(u^3)^3-6(u^1)^2(u^3)^3-2u^1u^2(u^3)^2\right)\lambda^i-u^1u^2(u^3)^2,\\
								\tilde J^i_{2}
&=\left((u^1)^2(u^3)^2-12u^2(u^3)^2-9u^1(u^3)^3\right)\lambda^i-(u^1)^2u^2(u^3)^2+6(u^2)^2(u^3)^2,\\
								\tilde J^i_{3}
&=-4u^3(\lambda^i)^3-\left((u^1)^2u^3-12u^2u^3-27u^1(u^3)^2-54(u^3)^3\right)(\lambda^i)^2 \\
&\quad -\left(27u^1u^2(u^3)^2-6(u^1)^3(u^3)^2+12(u^2)^2u^3-2(u^1)^2u^2u^3\right)\lambda^i\\
&\quad-(u^1)^2(u^2)^2u^3+4(u^2)^3u^3,
\\
								A_i&=6(u^3)^2(\lambda^i)^2+\left ((u^1u^3)^2-12u^2(u^3)^2-18u^1(u^3)^3-27(u^3)^4\right)\lambda^i \\
								&\quad +9u^1u^2(u^3)^3-2(u^1)^3(u^3)^3+6(u^2)^2(u^3)^2-(u^1)^2u^2(u^3)^2,
\end{align*}
for all $ 1 \leq i \leq 3 $.	
Then the diagonal coefficients $f^i$ \eqref{250908-2225} can be computed via
							\begin{align*}
								\begin{pmatrix}
									f^1&0&0\\
									0&f^2&0\\
									0&0&f^3
								\end{pmatrix}=
								J
								\begin{pmatrix}
									2&-3u^3&0\\[4pt]
									-3u^3&-2u^1u^3&u^3\\[4pt]
									0&u^3&0\\
								\end{pmatrix}
								J^\rmT.
							\end{align*}
Since $H_a^{\afa\beta}$ and $K_a^{\afa\beta}$
are coefficients of certain $(2,0)$-tensor fields,
we can find its coefficients $H_a^{ij}, K_a^{ij}$ with respect to the canonical coordinates by using Jacobian $J=(J^i_\afa)$.
According to the definition of central invariants \eqref{central invariants},
we calculate the central invariants of the bihamiltonian structure $(\mathcal{P}_{1},\mathcal{P}_{2})$ directly and obtain
							$$
							c_{1}=c_{2}=c_{3}=\frac{1}{24}.
							$$
							The theorem is proved.
						\end{proof}
						
In the similar way, we compute the central invariants of the bihamiltonian structure $(\mathcal{P}_{2},\mathcal{P}_{1})$ and obtain
						\begin{align}
							c_{i}=-\frac{1}{24\lambda^i}, \quad 1\leq i\leq 3.
						\end{align}

\begin{rmk}
Note that the central invariants are preserved under Miura-type transformations.
Then using relations \eqref{P2 - hat P2}--\eqref{P1 - hat P3} and taking into account the spatial reflection $\veps\mapsto -\veps$,
we immediately obtain that the central invariants of the bihamiltonian structure $(\widehat\mcalP_3, \widehat\mcalP_2)$ of the RR2T of type $(1,2)$ are
\begin{align*}
							c_1=c_2=c_3 = -\frac{1}{24},
						\end{align*}
while those of the bihamiltonian structure $(\widehat\mcalP_2, \widehat\mcalP_3)$ are
						\begin{align*}
							c_i = \frac{1}{24 \lambda^i}, \quad 1 \leq i \leq 3.
						\end{align*}
These results are analogous to Theorem 5.1 of \cite{AL-triham}.
\end{rmk}

\section{From the $(2,1)$-type RR2T to a generalized Frobenius manifold}
The notion of Frobenius manifold, introduced by Boris Dubrovin \cite{Frobenius manifold,Dubrovin1996},
provides a coordinate-free formulation of the Witten-Dijkgraaf-Verlinde-Verlinde (WDVV) equations of associativity which arise in the study of 2D topological ﬁeld theory (2D TFT) \cite{TFT-2,TFT-1}.
The geometric structures of a Frobenius manifold naturally yield a bihamiltonian structure of hydrodynamic type,
from which one can construct the associated bihamiltonian integrable hierarchy of hydrodynamic type, called the \textit{Principal Hierarchy}.
In recent years, the concept of Frobenius manifolds has been extended by dropping the flatness condition $\nabla e = 0$ of the unit vector field $e$,
where $\nabla$ is the Levi­-Civita connection with respect to the Frobenius
metric, leading to the development of generalized Frobenius manifolds \cite{Liu2024,GFM}.

In this section we will see that the dispersionless limits of the RR2T of type $(2,1)$
belong to the Principal Hierarchy of a certain generalized Frobenius manifold $M$ with non-flat unity.
Moreover, $M$ is related to the Frobenius manifolds corresponding to the bi-graded Toda hierarchy \cite{extend bi Toda} and the constrained KP hierarchy \cite{cKp} via certain generalized Legendre transformations \cite{LiuQuZhang,GLt}.

\subsection{Generalized Frobenius manifold}
Consider the Landau–Ginzburg superpotential
\begin{equation}\label{superpotential-3AL}
  \lmd(p) =
    \frac{p(p^2+up+v)}{p-w}
\end{equation}
associated to the Lax operator \eqref{Lax-L}, then
$\lmd(p)$ yields a generalized Frobenius structure $M(\eta,c,e,E)$ on the $(u,v,w)$-space
in a manner similar to that described in \cite{Brini 2012}.
Precisely speaking, the Frobenius metric $\eta$ and Frobenius multiplication $(0,3)$-tensor $c$
are given by the following Landau–Ginzburg formulae \cite{Dubrovin1996}
\begin{align}
  \eta(X_1, X_2) &=\,
  \sum_{p_0\in C_\lmd}
  \Res_{p=p_0}
  \left(
    \frac{X_1(\lmd(p))\cdot X_2(\lmd(p))}{\lmd'(p)\frac{\td p}{\td\omg}}
    \td\omg
  \right),
\label{Frob-eta}\\
  c(X_1, X_2, X_3) &=\,
  \sum_{p_0\in C_\lmd}
  \Res_{p=p_0}
  \left(
    \frac{X_1(\lmd(p))\cdot X_2(\lmd(p))\cdot X_3(\lmd(p))}{\lmd'(p)\frac{\td p}{\td\omg}}
    \td\omg
  \right),  \label{Frob-c}
\end{align}
where $X_1, X_2, X_3$ are vector fields on $M$,
$C_\lmd=\Bigset{p}{\lmd'(p)=0}$ is the set of critical points of $\lmd(p)$,
and the primary differential $\td\omg$ is chosen as
\begin{equation}
  \td\omg = \frac{\td p}{p}.
\end{equation}

						Introducing new coordinates $(z^1,z^2,z^3)$ such that
						\begin{align}\label{coordinate change}
							\left\{
							\begin{aligned}
								z^1 &= - \underset{p=\infty}{\mathrm{Res}} \lambda(p) \frac{\mathrm{d}p}{p} = v + uw + w^2, \\
								z^2 &= - \sqrt{2} \underset{p=\infty}{\mathrm{Res}} \lambda^{\frac{1}{2}}(p) \frac{\mathrm{d}p}{p} = \frac{u + w}{\sqrt{2}}, \\
								z^3 &= \log w,
							\end{aligned}
							\right.
						\end{align}
then the coefficients
\[\eta_{\afa\beta}=\eta\left(\pp{z^\afa}, \pp{z^\beta}\right),\qquad
c_{\afa\beta\gamma}=c\left(\pp{z^\afa}, \pp{z^\beta}, \pp{z^\gamma}\right)\]
and $(\eta^{\afa\beta})=(\eta_{\afa\beta})^{-1}$
of the tensors \eqref{Frob-eta}--\eqref{Frob-c} have the forms
\begin{equation}
  (\eta_{\afa\beta})=(\eta^{\afa\beta}) =
  \begin{pmatrix}
      &   & 1 \\
      & 1 &   \\
    1 &   &
  \end{pmatrix},
\end{equation}
\begin{equation}\label{c-ijk}
\left\{
\begin{split}
c_{111}&=\,\frac{1}{z^1},\quad
c_{112}=0,\quad
c_{113}=1,\quad
c_{122}=1,\\
c_{123}&=\,\sqrt{2}\,\rme^{z^3},\quad
c_{133}=2\rme^{2z^3}+\sqrt{2}\,z^2\rme^{z^3},\quad
c_{222}=-z^2,\\
c_{223}&=\,0,\quad
c_{233}=\sqrt{2}\,z^1\rme^{z^3},\quad
c_{333}=4z^1\rme^{2z^3}+\sqrt{2}\,z^1z^2\rme^{z^3},
\end{split}\right.
\end{equation}
therefore the metric $\eta$ is flat, and $(z^1, z^2, z^3)$ forms a family of flat coordinates.
The potential $F$ of this Frobenius manifold is
\begin{equation}\label{F-GAL}
  F=\frac12(z^1)^2z^3
   +\frac12 z^1 (z^2)^2
   +\sqrt{2}\,z^1 z^2 \rme^{z^3}
   +\frac12 z^1 \rme^{2z^3}
   -\frac1{24} (z^2)^4
   +\frac12 (z^1)^2\log z^1,
\end{equation}
which satisfies $c_{\afa\beta\gamma}=\frac{\p^3 F}{\p z^\alpha \p z^{\beta}\p z^{\gamma}}$
and the following quasi-homogeneity condition
\begin{equation}
  \mcalL_EF=(3-d)F + \text{quadratic terms},
\end{equation}
where the Euler vector field $E$ and the charge $d$ have the form
\begin{equation}\label{Euler-E}
  E=z^1\pp{z^1}+\frac12 z^2\pp{z^2} + \frac12\pp{z^3},
\qquad
  d=1.
\end{equation}
Moreover, the unit vector field $e$ with respect to the Frobenius multiplication is
\begin{equation} \label{unity e}
  e
=
  \frac{
    z^1\p_{1}
   +\sqrt{2}\,\rme^{z^3}\p{_2}
   -\p_{3}
  }{z^1-\sqrt{2}\,z^2\rme^{z^3}}, \quad \p_\afa:=\pp{z^\afa},
\end{equation}
which can be represented as a gradient field $e=\grad_{\eta}\theta_{0,0}$, where
\begin{equation}\label{theta00}
\theta_{0,0} = z^3-\log\left(
  z^1-\sqrt{2}\,z^2\rme^{z^3}
\right) = \log\frac wv.
\end{equation}

It can be verified that the above-mentioned $M(\eta,c,e,E)$ satisfies all the axioms
of the (Dubrovin-)Frobenius manifold (see \cite{Dubrovin1996} for details), except the flatness of the unity $e$.
Therefore $M(\eta,c,e,E)$ forms a generalized Frobenius manifold in the sense of \cite{Liu2024,GFM}.

						\begin{rmk}\label{rmk:4.1}
							The flat coordinates $(z^1, z^2, z^3)$ in \eqref{coordinate change} coincide with the dispersionless limits $\varepsilon \to 0$ of \eqref{full-genera-z}. In this section we tolerate this abuse of notation for simplicity.
						\end{rmk}

We also note that $\theta_{0,0}$ in \eqref{theta00} coincides with the (dispersionless limit of the) density
of the Hamiltonian $H_{0,-1}$ in \eqref{H-0,-1}.
Moreover, the \textit{intersection form} \cite{Dubrovin1996,GFM}
\begin{equation}
  g=g^{\afa\beta}\pp{z^\afa}\otimes\pp{z^\beta},\qquad
  g^{\afa\beta}=E^\gamma c^{\afa\beta}_\gamma
\end{equation}
of this generalized Frobenius manifold can be represented as
\begin{equation}
  (g^{\afa\beta})
=
\begin{pmatrix}
   4z^1\rme^{2z^3} + 2\sqrt{2}\,z^1z^2\rme^{z^3}
  & \frac{3}{\sqrt{2}}z^1\rme^{z^3}
  & z^1+\sqrt{2}\,z^2\rme^{z^3}+\rme^{2z^3} \\[4pt]
   \frac{3}{\sqrt{2}}z^1\rme^{z^3}
  & z^1-\frac12(z^2)^2 & \frac{\rme^{z^3}}{\sqrt{2}}+\frac12z^2 \\[4pt]
  z^1+\sqrt{2}\,z^2\rme^{z^3}+\rme^{2z^3}
  & \frac{\rme^{z^3}}{\sqrt{2}}+\frac12z^2 & \frac 32
\end{pmatrix},
\end{equation}
and the contravariant metrics $\eta^{\afa\beta}, g^{\afa\beta}$ form a flat pencil on $M$.
In terms of the old coordinates $(u^1, u^2, u^3):=(u,v,w)$, the coefficient matrices of this flat pencil
coincides with \eqref{flat pencil}, which are also the leading coefficients of
the Hamiltonian operators $\mcalP_1, \mcalP_2$ in \eqref{HamP1}, \eqref{HamP2}.

\subsection{Principal Hierarchy}
In this subsection, we are to construct the Principal Hierarchy of the generalized Frobenius manifold $M$ \eqref{F-GAL} following the definitions of \cite{GFM}, and show that the dispersionless limits of the first flows of the RR2T of type $(2,1)$ belong to this Principal Hierarchy of $M$.

In general, let $M(\eta,c,e,E)$ be an $n$-dimensional generalized Frobenius manifold with non-flat unity, fix a family of flat coordinates $(z^1,\ldots,z^n)$. As it was introduced in  \cite{GFM}, the \textit{Principal Hierarchy} of $M$ is a family of evolutionary PDEs of the form
			\begin{equation}\label{PH}
				\frac{\partial z^\alpha}{\partial t^{i,k}} = \eta^{\alpha\beta} \partial_x \frac{\partial \theta_{i,k+1}}{\partial z^\beta}, \quad (i,k) \in \mathcal{I}
			\end{equation}
						about time variables $\mathbf{t} = \{t^{i,k}\}_{(i,k) \in \mathcal{I}}$, where the index set
						\begin{align*}
							\mathcal{I} = \left( \{1,2,\ldots,n\} \times \mathbb{Z}_{\geq 0} \right) \cup \left( \{0\} \times \mathbb{Z} \right),
						\end{align*}
						and the functions $\theta_{i,k}$ on $M$ satisfy the following initial conditions, recursion relations and quasi-homogeneity conditions
						\begin{equation}\label{4.16}
							\theta_{\alpha,0} = \eta_{\alpha\beta} z^\beta, \quad \nabla \theta_{0,0} = e,
						\end{equation}
						\begin{equation}\label{4.17}
							\partial_\alpha \partial_\beta \theta_{i,k+1} = c_{\alpha\beta}^\gamma \partial_\gamma \theta_{i,k},\quad c_{\alpha\beta}^\gamma=\eta^{\gamma\lambda }c_{\lambda\alpha\beta},
						\end{equation}
						\begin{equation}\label{4.18}
							\mathcal{L}_E \theta_{i,k} = \left( k + \mu_i + \mu_0 + 1 \right) \theta_{i,k} + \sum_{s=1}^k \theta_{\alpha,k-s} \left( \widetilde{R}_s \right)_i^\alpha + \left( \widetilde{R}_{k+1} \right)_i^{n+1},
						\end{equation}
						for $(i,k) \in \mathcal{I}$, here all the lower Greek indices are assumed to range over $\{1,2,\ldots,n\}$. We further recall that $\widetilde \mu_i$ and $(\widetilde{R}_s)^i_j$ are entries of certain constant $(n+2) \times (n+2)$ matrices
						$$
						\widetilde{\mu} = \text{diag}\left(\mu_0; \mu_1, \ldots, \mu_n; -\mu_0\right), \quad \widetilde{R} = \sum_{s\geq 1} \widetilde{R}_s,
						$$ whose rows and columns are labeled by indices in $\{0; 1, \ldots, n; n+1\}$. More details about the properties of $\widetilde{\mu}, \widetilde{R}$ can be found in Remark 5.2 of \cite{GFM} and Remark 3.4 of \cite{LiuQuZhang}.
The flows of the Principal Hierarchy \eqref{PH} can be rewritten as the following Hamiltonian system of hydrodynamic type
						\begin{equation}
							\frac{\partial z^\alpha}{\partial t^{i,k}} = (\mathcal{P}_1^{[0]})^{\alpha\beta} \frac{\delta H_{i,k}^{[0]}}{\delta z^\beta}, \quad (i,k) \in \mathcal{I},
						\end{equation}
						and satisfy the following bihamiltonian recursion relations
						\begin{equation}
							(\mathcal{P}_2^{[0]})^{\alpha\beta} \frac{\delta H_{i,k-1}^{[0]}}{\delta z^\beta}
= \left(k + \mu_i + \frac{1}{2}\right) (\mathcal{P}_1^{[0]})^{\alpha\beta} \frac{\delta H_{i,k}^{[0]}}{\delta z^\beta}
+ \sum_{s=1}^k \left(\widetilde{R}_s\right)^{\gamma}_{i} (\mathcal{P}_1^{[0]})^{\alpha\beta} \frac{\delta H_{\gamma,k-s}^{[0]}}{\delta z^\beta},
						\end{equation}
						where the compatible Hamiltonian operators $\{\mathcal{P}_1^{[0]}, \mathcal{P}_2^{[0]}\}$ have the form
						$$
						(\mathcal{P}_1^{[0]})^{\alpha\beta} = \eta^{\alpha\beta} \partial_x, \quad (\mathcal{P}_2^{[0]})^{\alpha\beta} = g^{\alpha\beta} \partial_x + \Gamma^{\alpha\beta}_{\gamma} z_x^\gamma,
						$$
						and the Hamiltonians
						\begin{align*}
							H_{i,k-1}^{[0]} = \int \theta_{i,k}(\mathbf{z}(x)) \, \td x, \quad (i,k) \in \mathcal{I}.
						\end{align*}
						Here $g^{\alpha\beta} = E^\gamma c_\gamma^{\alpha\beta}$ and $\Gamma_\gamma^{\alpha\beta} = \left(\frac{1}{2} - \mu_\beta\right) c_\gamma^{\alpha\beta}$ are the intersection form and the contravariant coefficients of the Levi-Civita connection of $(g^{\alpha\beta})$. More details can be found in \cite{Normal forms,GFM}.

In the special case of the 3-dimensional generalized Frobenius manifold $M$ \eqref{F-GAL}, the first few terms of $\{\theta_{i,k}\}$ can be obtained by solving \eqref{4.16}--\eqref{4.18} directly:
						\begin{align}
							\theta_{1,0} &= z^3, \quad
							\theta_{1,1} = z^1 \left( z^3 + \log z^1 - 1 \right) + \frac{1}{2} (z^2)^2 + \sqrt{2} z^2 \rme^{z^3} + \frac{1}{2} \rme^{2z^3}, \label{1,1}\\
							\theta_{2,0} &= z^2, \quad
							\theta_{2,1} = z^1 z^2 + \sqrt{2} z^1 \rme^{z^3} - \frac{1}{6} (z^2)^3, \\
							\theta_{3,0} &= z^1, \quad
							\theta_{3,1} = \frac{1}{2} (z^1)^2 + \sqrt{2} z^1 z^2 \rme^{z^3} + z^1 \rme^{2z^3},  \\
							\theta_{0,-1} &= \frac{-z^1 + \rme^{2z^3}}{(z^1 - \sqrt{2} z^2 \rme^{z^3})^2}, \quad
							\theta_{0,1} = z^1 z^3 + \frac{1}{2} (z^2)^2,
						\end{align}
and $\theta_{0,0}$ is given by \eqref{theta00}.
The data $\tilde{\mu}$, $\tilde{R}$ have the form
						\begin{equation}\label{data}
							\tilde{\mu} = \begin{pmatrix}
								-\frac{1}{2} & & & &\\[4pt]
								& -\frac{1}{2} & & &\\[4pt]
								& & 0 & & \\[4pt]
								& & & \frac{1}{2} &\\[4pt]
								& & & &\frac{1}{2}
							\end{pmatrix}, \qquad
							\tilde{R} = \tilde{R}_1 = \begin{pmatrix}
								0 & 0 & 0 & 0 & 0 \\[4pt]
								0 & 0 & 0 & 0 & 0 \\[4pt]
								0 & 0 & 0 & 0 & 0 \\[4pt]
								\frac{1}{2} & \frac{3}{2} & 0 & 0 & 0 \\[4pt]
								-\frac{1}{2} & \frac{1}{2} & 0 & 0 & 0
							\end{pmatrix},
						\end{equation}
whose rows and columns are labeled by $\{0,1,2,3,4\}$. Then by straightforward calculation, 
the first few flows of the Principal Hierarchy \eqref{PH} of $M$ have the form
						\begin{align}
							\frac{\partial}{\partial t^{1,0}} \begin{pmatrix} u \\[4pt] v \\[4pt] w \end{pmatrix}
&= \frac{1}{v + uw + w^2}
							\begin{pmatrix}
								v + uw & -w & uw + 2v \\[4pt]
								vw & v & uv + 2vw \\[4pt]
								w^2 & w & v + 2uw + 3w^2
							\end{pmatrix}
							\begin{pmatrix} u_x \\[4pt] v_x \\[4pt] w_x \end{pmatrix},  \label{PH t10}
\\
							\frac{\partial}{\partial t^{2,0}} \begin{pmatrix} u \\[4pt] v \\[4pt] w \end{pmatrix} &= \frac{1}{\sqrt{2}} \begin{pmatrix} -u & 2 & u \\[4pt] 0 & 0 & 2v \\[4pt] w & 0 & 3w \end{pmatrix} \begin{pmatrix} u_x \\[4pt] v_x \\[4pt] w_x \end{pmatrix},
\label{250909-PH-1}\\
							\frac{\partial}{\partial t^{3,0}}
\begin{pmatrix} u \\[4pt] v \\[4pt] w \end{pmatrix} &=
\begin{pmatrix}
  0 & w & 2v + 2uw \\[4pt]
  vw & 0 & uv + 4vw \\[4pt]
  2w^2 & w & 2uw + 6w^2
\end{pmatrix} \begin{pmatrix} u_x \\[4pt] v_x \\[4pt] w_x \end{pmatrix},
\label{250909-PH-2}
\\
							\frac{\partial}{\partial t^{0,-1}} \begin{pmatrix} u \\[4pt] v \\[4pt] w \end{pmatrix} &=
\frac{1}{v^2} \begin{pmatrix} 0 & -3w & 2v \\[4pt] vw & -2uw &  uv
\\[4pt] 0 & w & 0 \end{pmatrix} \begin{pmatrix} u_x \\[4pt] v_x \\[4pt] w_x \end{pmatrix}
\label{250909-PH-3}
						\end{align}
with respect to the change of coordinates \eqref{coordinate change}.
Then it is straightforward to verify that the flows $\pp{t^{0,-1}}, \pp{t^{2,0}}$ and $\pp{t^{3,0}}$
in \eqref{250909-PH-1}--\eqref{250909-PH-3}
coincide with that in the dispersionless limits of the flows of the $(2,1)$-type RR2T, see \eqref{rad-flow-u}--\eqref{negative-t-0-(-1)}.
						
In fact, all the densities $\{\theta_{i,k}\}_{(i,k)\in\mathcal{I}}$ of $M$ \eqref{F-GAL} can be represented in terms of the superpotential \eqref{superpotential-3AL} in a manner similar to that described in Sect.\,2 of \cite{extend AL}.
To see this, first note that the superpotential \eqref{superpotential-3AL} has the form
						\begin{align*}
							\lambda = p^2 \left( 1 + \frac{u}{p} + \frac{v}{p^2} \right) \left( 1 - \frac{w}{p} \right)^{-1},
						\end{align*}
and then we define
						\begin{equation*}
							\begin{split}
								\lambda^{\frac{1}{2}} &= p \left( 1 + \frac{u}{p} + \frac{v}{p^2} \right)^{\frac{1}{2}} \left( 1 - \frac{w}{p} \right)^{-\frac{1}{2}} \\
								&= p + \frac{1}{\sqrt{2}} z^2 + \left( \frac{1}{2} z^1 - \frac{1}{4} (z^2)^2 \right) \frac{1}{p}
+ \left( \frac{1}{2} z^1 \rme^{z^3} - \frac{1}{2\sqrt{2}} z^1 z^2 + \frac{1}{4\sqrt{2}} (z^2)^3 \right) \frac{1}{p^2} + \cdots,
							\end{split}
						\end{equation*}
						which is a formal Laurent series at $ p = \infty $. Denote formal variable $ q $ as
						\begin{equation}
							q := p - w = p - \rme^{z^3},
						\end{equation}
						and notice that the superpotential $ \lambda $ can be expressed by the following three ways
						\begin{align*}
							\lambda &= \frac{z^1 \rme^{z^3}}{q} \left( 1 + \rme^{-z^3} q \right) \left( 1 + \frac{\sqrt{2} z^2 + \rme^{z^3}}{z^1} q + \frac{1}{z^1} q^2 \right)\\
							&= q^2 \left( 1 + \frac{\rme^{z^3}}{q} \right) \left( 1 + \frac{\sqrt{2} z^2 + \rme^{z^3}}{q} + \frac{z^1}{q^2} \right)\\
							&= z^1 \left( 1 + \frac{\rme^{z^3}}{q} \right) \left( 1 + \frac{\sqrt{2} z^2 + \rme^{z^3}}{z^1} q + \frac{1}{z^1} q^2 \right),
						\end{align*}
						then we introduce the following formal series of $ q = p - \rme^{z^3} $
						\begin{align}
							\log_+ \lambda &:= -\log q + (z^3 + \log z^1)
+ \sum_{s=1}^{\infty} \frac{(-1)^{s+1}}{s} \left( \Big( \rme^{-z^3} q \Big)^s +
\Big( \frac{\sqrt{2} z^2 + \rme^{z^3}}{z^1} q + \frac{1}{z^1} q^2 \Big)^s \right),
							\nonumber\\
							\log_- \lambda &:= 2 \log q + \sum_{s=1}^{\infty} \frac{(-1)^{s+1}}{s}
\left( \Big( \frac{\rme^{z^3}}{q} \Big)^s + \Big( \frac{\sqrt{2} z^2 + \rme^{z^3}}{q} + \frac{z^1}{q^2} \Big)^s \right),
							\nonumber\\
							\widetilde{\log} \lambda &:= \log z^1 + \sum_{s=1}^{\infty} \frac{(-1)^{s+1}}{s}
\left( \Big( \frac{\rme^{z^3}}{q} \Big)^s +
\Big( \frac{\sqrt{2} z^2 + \rme^{z^3}}{z^1} q + \frac{1}{z^1} q^2 \Big)^s \right),
						\end{align}
						\begin{equation}
							\log \lambda := \frac{2 \log_+ \lambda + \log_- \lambda}{3}.
						\end{equation}
						\begin{prop}\label{prop-250909}
							The densities $\{\theta_{i,k}\}_{(i,k)\in\mathcal{I}}$ of $M$ \eqref{F-GAL} can be chosen as
							\begin{align}
								\theta_{2,k} &= -\frac{2^{k+\frac{1}{2}}}{(2k+1)!!}\res_{p=\infty}\left(\lambda^{k+\frac{1}{2}}\frac{\mathrm{d}p}{p}\right),\label{2,k} \\
								\theta_{3,k} &= -\frac{1}{(k+1)!}\res_{p=\infty}\left(\lambda^{k+1}\frac{\mathrm{d}p}{p}\right), \label{3,k}\\
								\theta_{0,-k-1} &= (-1)^{k+1}k! \res_{p=0}\left(\lambda^{-k-1}\frac{\mathrm{d}p}{p}\right), \label{0,-k-1}
							\end{align}
							for each $k\geq 0$, and
							\begin{align}
								\theta_{1,k} &= \frac{3}{2k!}\underset{{p=w}}{\Res}\left(\lambda^k\left(\log\lambda-H_k\right)\frac{\mathrm{d}p}{p}\right), \\
								\theta_{0,k} &= \theta_{1,k}-\frac{1}{k!}\underset{{p=0}}{\Res}
\left(\lambda^k\left(\widetilde\log\lambda-H_k\right)\frac{\td p}{p}\right), \label{0,k}
							\end{align}
							for each $k\geq 1$, where $H_k = 1 + \frac{1}{2} + \cdots + \frac{1}{k}$.
						\end{prop}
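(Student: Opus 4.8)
The plan is to invoke the characterization of the Principal Hierarchy from \cite{GFM}: once the calibration is fixed, the densities $\{\theta_{i,k}\}_{(i,k)\in\mcalI}$ are uniquely determined by the initial conditions \eqref{4.16}, the recursion relations \eqref{4.17}, and the quasi-homogeneity conditions \eqref{4.18}. It therefore suffices to verify that the residue formulae \eqref{2,k}--\eqref{0,k} satisfy these three sets of conditions. The elementary device underlying every step is that the primary differential $\td\omg=\td p/p$ is independent of $(u,v,w)$, so differentiation along $M$ commutes with $\res$ and acts only on the superpotential: for each of the three marked points $p_0\in\{\infty,0,w\}$ and any $\phi$, one has $\p_\afa\res_{p=p_0}(\phi(\lmd)\,\td\omg)=\res_{p=p_0}(\phi'(\lmd)\,\p_\afa\lmd\,\td\omg)$, where $\p_\afa\lmd$ is the naive partial derivative of \eqref{superpotential-3AL}. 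First I would dispatch the initial conditions \eqref{4.16} by direct Laurent expansion: for instance $\lmd=p^2+(u+w)p+(v+uw+w^2)+\cdots$ at $p=\infty$ yields $\theta_{3,0}=z^1$ and $\theta_{2,0}=z^2$, one checks $\theta_{1,0}=z^3$, and $\theta_{0,0}$ in \eqref{0,k} reduces to $\log(w/v)$, matching \eqref{theta00}.

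The heart of the argument is the recursion \eqref{4.17}. Here I would use the Landau--Ginzburg presentation of the multiplication encoded in \eqref{Frob-eta}--\eqref{Frob-c}: since $c(X_1,X_2,X_3)=\eta(X_1\cdot X_2,X_3)$, the gradients of $\lmd$ satisfy the congruence $(\p_\afa\lmd)(\p_\beta\lmd)\equiv c_{\afa\beta}^\gamma\,\p_\gamma\lmd \pmod{\lmd'(p)}$ in the Jacobian ring, i.e.\ $(\p_\afa\lmd)(\p_\beta\lmd)-c_{\afa\beta}^\gamma\p_\gamma\lmd=Q_{\afa\beta}(p)\,\lmd'(p)$ for a rational $Q_{\afa\beta}$. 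Writing each density as $\mathrm{const}\cdot\res_{p_0}(\phi(\lmd)\,\td\omg)$ with $\phi'(\lmd)$ proportional to the density of $\theta_{i,k}$, I would expand $\p_\afa\p_\beta\theta_{i,k+1}$ by the product rule, substitute the congruence into the $\phi''(\lmd)\,\p_\afa\lmd\,\p_\beta\lmd$ term, and use $\phi''(\lmd)\lmd'=\p_p(\phi'(\lmd))$ to integrate by parts in $p$ inside the residue. The part carrying $c_{\afa\beta}^\gamma$ then reproduces $c_{\afa\beta}^\gamma\p_\gamma\theta_{i,k}$ (the factorial normalizations being arranged to match), while the remaining contributions --- those coming from $Q_{\afa\beta}\lmd'$ after integration by parts, together with the $\phi'(\lmd)\,\p_\afa\p_\beta\lmd$ term --- collapse to the residue of $\lmd^{k+1}\tfrac1p\bigl(\p_pQ_{\afa\beta}-\tfrac{Q_{\afa\beta}}{p}-\p_\afa\p_\beta\lmd\bigr)$, which vanishes by the structural identity $\p_\afa\p_\beta\lmd=\p_pQ_{\afa\beta}-Q_{\afa\beta}/p$ of the flat Landau--Ginzburg data (the counterpart of Dubrovin's identity). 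The same scheme treats the half-integer powers at $p=\infty$ governing $\theta_{2,k}$ and the negative powers at $p=0$ governing $\theta_{0,-k-1}$ in \eqref{0,-k-1}.

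For the quasi-homogeneity \eqref{4.18} I would use that $\lmd(p)$ is quasi-homogeneous of weight $1$ under the Euler field $E$ of \eqref{Euler-E}, with $\td\omg$ of weight zero; consequently $\mcalL_E$ multiplies $\res_{p_0}(\lmd^{s}\,\td\omg)$ by a weight affine in $s$, and a short check (e.g.\ $\theta_{3,k}$ has weight $k+1=k+\mu_3+\mu_0+1$) shows this reproduces the scalar $k+\mu_i+\mu_0+1$ exactly for the power families. The logarithmic families $\theta_{1,k},\theta_{0,k}$, built from $\log\lmd=(2\log_+\lmd+\log_-\lmd)/3$ and $\widetilde{\log}\lmd$ at the three marked points, require the harmonic-number shifts $-H_k$: because $\mcalL_E\log\lmd=1$, the term $\lmd^k\log\lmd$ acquires an anomalous summand $\lmd^k$ under $\mcalL_E$, and the $-H_k$ are precisely the corrections turning this anomaly into the resonance terms carried by the constant matrix $\widetilde R$ in \eqref{data}; at $k=0$ they enforce $\nabla\theta_{0,0}=e$ with $e$ the non-flat unity \eqref{unity e}.

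I expect the main obstacle to be the recursion and homogeneity for the logarithmic densities $\theta_{1,k}$ and $\theta_{0,k}$: one must show that the three local series $\log_+\lmd,\log_-\lmd,\widetilde{\log}\lmd$ combine consistently under integration by parts against $\td\omg=\td p/p$, and that the harmonic numbers absorb exactly the $\widetilde R$-contributions so that \eqref{4.17} and \eqref{4.18} hold simultaneously. This bookkeeping --- the interplay of the three marked points $p=\infty,0,w$, the $1/p$ corrections, and the $\log$-normalizations --- is the delicate part, and it is carried out in close analogy with Sect.\,2 of \cite{extend AL}, to which the present computation reduces after the change of coordinates \eqref{coordinate change}.
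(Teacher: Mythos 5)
Your proposal is correct and follows essentially the same route as the paper: verify the initial data by direct Laurent expansion, establish the recursion \eqref{4.17} by showing that $\partial_\alpha\partial_\beta\Theta_{i,k+1}-c^\gamma_{\alpha\beta}\partial_\gamma\Theta_{i,k}$ is the residue of an exact differential in $p$, and check the quasi-homogeneity conditions using the weight of $\lambda$ under $E$ together with the harmonic-number shifts that account for the $\widetilde{R}$-resonances. The only presentational difference is that you organize the exactness argument through the general Jacobian-ring congruence $(\partial_\alpha\lambda)(\partial_\beta\lambda)\equiv c^\gamma_{\alpha\beta}\partial_\gamma\lambda \pmod{\lambda'(p)}$, whereas the paper exhibits explicit primitives for sample index pairs and omits the rest.
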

						\begin{proof}
							Denote $\Theta_{2,k}$, $\Theta_{3,k}$, $\Theta_{0,-k-1}$, $\Theta_{1,k}$, and $\Theta_{0,k}$ as the right-hand side of \eqref{2,k}--\eqref{0,k} respectively, then
							\begin{align*}
								\Theta_{1,1} &= \frac{3}{2}\underset{{p=w}}{\Res}\left(\frac{\lambda}{p}(\log\lambda - 1)\,\mathrm{d}p\right) = \frac{3}{2}\underset{{q=0}}{\Res}\left(\frac{\lambda}{q + \rme^{z^3}}(\log\lambda - 1)\,\mathrm{d}q\right)\\
								&=\frac{3}{2}\underset{{q=0}}{\Res}\left(\left(q+(\sqrt{2}z^2+\rme^{z^3})+\frac{z^1}{q}\right)(\log \lambda-1) \mathrm{d}q\right)\\
								&=z^1(z^3+\log z^1-1)+\frac{1}{2}(z^2)^2+\sqrt{2}z^2\rme^{z^3}+\frac{1}{2}\rme^{2z^3},
							\end{align*}
							which coincides with the $\theta_{1,1}$ given in \eqref{1,1}. In the similar way, we can also show
							\begin{align*}
								\Theta_{2,0}=\theta_{2,0},\quad 	\Theta_{3,0}=\theta_{3,0},\quad 	\Theta_{0,-1}=\theta_{0,-1},\quad
								\Theta_{0,1}=\theta_{0,1},
							\end{align*}
							directly. To prove the validity of formulae \eqref{2,k}--\eqref{0,k}, we need to show that $\Theta_{i,k}$ satisfies the  relations \eqref{4.17}--\eqref{4.18}, i.e.
							\begin{equation}\label{4.40}
								\partial_\alpha \partial_\beta \Theta_{i,k+1} = c^\gamma_{\alpha\beta} \partial_\gamma \Theta_{i,k}, \qquad \partial_\alpha := \frac{\partial}{\partial z^\alpha},
							\end{equation}
							where the Frobenius coefficients $c_{\alpha\beta\gamma}$ of $M$ are given in \eqref{c-ijk}, $c^\gamma_{\alpha\beta} = \eta^{\gamma\delta} c_{\alpha\beta\delta}$, and
							\begin{align}\label{4.41}
								\mathcal{L}_E \Theta_{2,k} &= \left(k + \frac{1}{2}\right) \Theta_{2,k}, \quad \mathcal{L}_E\Theta_{3,k} = (k + 1) \Theta_{3,k}, \quad  \mathcal{L}_E\Theta_{0,-k-1} = (-k - 1) \Theta_{0,-k-1}, \nonumber \\
								\mathcal{L}_E\Theta_{1,\ell} &= \ell \Theta_{1,\ell} + \frac{3}{2} \Theta_{3,\ell-1}, \quad \mathcal{L}_E \Theta_{0,\ell}= \ell \Theta_{0,\ell} + \frac{1}{2} \Theta_{3,\ell-1},
							\end{align}
							for each $k \geq 0$ and $\ell \geq 1$, where the Euler vector field $E$ is given in \eqref{Euler-E}.
Notice that
							$$
							\partial_\alpha \partial_\beta \Theta_{1,k+1} = \frac{3}{2k!} \underset{{p=w}}{\Res} \left[ \left( \frac{\partial^2 \lambda}{\partial z^\alpha \partial z^\beta} \cdot \lambda^k (\log \lambda - H_k) + \frac{\partial \lambda}{\partial z^\alpha} \frac{\partial \lambda}{\partial z^\beta} \cdot k \lambda^{k-1} (\log \lambda - H_{k-1}) \right) \frac{\mathrm{d}p}{p} \right],
							$$
							and then by straightforward calculation we obtain
							\begin{align*}
								\partial_1 \partial_3 \Theta_{1,k+1} - c^\alpha_{13} \partial_\alpha \Theta_{1,k} &= \frac{3}{2k!} \underset{{p=w}}{\Res} \left[ \frac{\partial}{\partial p} \left( \frac{-\rme^{z^3}}{p - \rme^{z^3}} \cdot \lambda^k (\log \lambda - H_k) \right) \mathrm{d}p \right] = 0, \\
								\partial_3 \partial_3 \Theta_{1,k+1} - c^\alpha_{33} \partial_\alpha \Theta_{1,k} &= \frac{3}{2k!} \underset{{p=w}}{\Res} \left[ \frac{\partial}{\partial p} \left( \frac{-z^1 \rme^{z^3} p}{(p - \rme^{z^3})^2} \cdot \lambda^k (\log \lambda - H_k) \right) \mathrm{d}p \right] = 0.
							\end{align*}
The validity of other cases of \eqref{4.40} can be verified in a similar way, and we omit its details here.
Finally, by the residue theorem, $\Theta_{3,k}$ \eqref{3,k} can also be represented as
							$$
							\Theta_{3,k} = \frac{1}{(k+1)!} \underset{{p=w}}{\Res} \left( \lambda^{k+1} \frac{\mathrm{d}p}{p} \right),
							$$
							from which the validity of the quasi-homogeneous conditions \eqref{4.41} can be verified by straightforward calculations.
The proposition is proved.
						\end{proof}
						\begin{rmk}
It is clear that $\theta_{2,k}$, $\theta_{3,k}$, $\theta_{0,-k-1}$ in \eqref{2,k}–\eqref{0,-k-1} coincide with the leading terms
							of the Hamiltonian densities of $H_{2,k-1}$, $H_{3,k-1}$, $H_{0,-k-2}$ in \eqref{AL-H2k}--\eqref{AL-H0-k-1} respectively,
and then from Remark \ref{rmk:4.1} and the definition of Principal Hierarchy of $M$,
it is clear that the dispersionless limits of flows of the $(2,1)$-type RR2T \eqref{GAL-1}--\eqref{GAL-6} coincide with the corresponding flows in the Principal Hierarchy of $M$ \eqref{F-GAL}.
						\end{rmk}

In other words, the flows \eqref{GAL-1}--\eqref{GAL-6} are certain deformations of the $\frac{\partial}{\partial{t^{2,k}}}$,
$\frac{\partial}{\partial{t^{3,k}}}$ and $\frac{\partial}{\partial{t^{0,-k-1}}}$-flows in the Principal Hierarchy of $M$ \eqref{F-GAL}.

\begin{rmk}\label{rmk:extended flows}
Since the first bihamiltonian cohomology of the
dispersionless limit of the bihamiltonian structure \eqref{HamP1}, \eqref{HamP2} is trivial \cite{Hamiltonian perturbation, Liu lecture notes},
there exist unique deformations of the flows $\{\pp{t^{0,k}}, \pp{t^{1,k}}\}_{k\geq 0}$ in the Principal Hierarchy of $M$ \eqref{F-GAL}, such that these deformed flows commute with all the flows \eqref{GAL-1}--\eqref{GAL-6} of the $(2,1)$-type RR2T.
\end{rmk}

The hierarchy consisting of the flows \eqref{GAL-1}--\eqref{GAL-6} and the above-mentioned deformed $\{\pp{t^{0,k}}, \pp{t^{1,k}}\}_{k\geq 0}$ flows
is called the \textit{extended RR2T of type $(2,1)$},
which is analogous to the concept of the extended Ablowitz--Ladik hierarchy \cite{extend AL}.
We will provide an explicit expression of this deformed $\frac{\partial}{\partial{t^{1,0}}}$-flow in Remark \ref{rmk:250915-2012}.

\subsection{Generalized Legendre transformations}
\label{subsection:Legendre transf}

There are two specific generalized Legendre transformations that relate $M$ \eqref{F-GAL} to the Frobenius manifolds associated with the bi-graded Toda hierarchy and the constrained KP hierarchy, respectively. To see this, we first review some preliminary concepts in \cite{LiuQuZhang,GLt}.

Let $ M(\eta, c, e, E) $ be an $ n $-dimensional generalized Frobenius manifold. A vector field $ B \in \mathrm{Vect}(M) $ is called \textit{Legendre}, if
						\begin{equation*}
							X \cdot \nabla_Y B = Y \cdot \nabla_X B, \quad \forall X, Y \in \mathrm{Vect}(M),
						\end{equation*}
						where $ \nabla $ is the Levi-Civita connection with respect to $ \eta $, and $ \cdot = c $ is the Frobenius multiplication. If $ B $ is an invertible Legendre field, i.e., $ B $ is Legendre and $ \exists B^{-1} \in \mathrm{Vect}(M) $ such that $ B \cdot B^{-1} = e $, then there exists a new flat metric $ \hat{\eta} $ on $ M $ such that
						\begin{equation}
							\hat{\eta}(X, Y) = \eta(B \cdot X, B \cdot Y), \quad \forall X, Y \in \mathrm{Vect}(M).
						\end{equation}
						A Legendre field $ B $ is called \textit{quasi-homogeneous}, if there exists $ \mu_B \in \mathbb{C} $ such that
						\begin{align*}
							[E, B] = \left( \mu_B - \frac{2-d}{2} \right) B,
						\end{align*}
						where $ d $ is the charge of $ M(\eta, c, e, E) $.
						\par Suppose $ B $ is a quasi-homogeneous, invertible Legendre field on $ M $, then $ \hat{M}:= M(\hat{\eta}, c, e, E) $ is also a generalized Frobenius manifold, which is called the \textit{generalized Legendre transformation} of $ M $ along $ B $. Note that $ \hat{M} $ and $ M $ share the same Frobenius multiplication $ c $ (therefore the same unity $ e $) and Euler field $ E $. Moreover, the inversion $ \hat{B} := B^{-1} $ is quasi-homogeneous Legendre field on $ \hat{M} $, and transforms $ \hat{M} $ back to $ M $, so we also say the pairs $ (M, B) $ and $ (\hat{M}, \hat{B}) $ are \textit{Legendre dual} to each other.
						\par Fix a family of flat coordinates $ (z^1, \ldots, z^n) $ on $ M $, and a quasi-homogeneous, invertible Legendre field $ B = B^\alpha \frac{\partial}{\partial z^\alpha} $, then we can find a family of flat coordinates $ (\hat{z}^1, \ldots, \hat{z}^n) $ with respect to the new metric $ \hat{\eta} $, such that
						\begin{equation}\label{relations4.43}
							\frac{\partial \hat{z}^\alpha}{\partial z^\beta} = B^\gamma c_{\beta \gamma}^\alpha, \quad \eta \left( \frac{\partial}{\partial z^\alpha}, \frac{\partial}{\partial z^\beta} \right) = \hat{\eta} \left( \frac{\partial}{\partial \hat{z}^\alpha}, \frac{\partial}{\partial \hat{z}^\beta} \right),
						\end{equation}
						where $ c_{\beta \gamma}^\alpha $ is the coefficients of the Frobenius multiplication $ c $ with respect to $ z^\alpha $-coordinates,
and the potential $ F $ and $ \hat{F} $ of $ M $ and $ \hat{M} $ are related by
						\begin{equation}\label{4.44}
							\frac{\partial^2 F}{\partial z^\alpha \partial z^\beta} = \frac{\partial^2 \hat{F}}{\partial \hat{z}^\alpha \partial \hat{z}^\beta}.
						\end{equation}
If both the unity $e$ and the Legendre field $B$ are flat, i.e. $\nabla e = \nabla B = 0 $, then corresponding
						generalized Legendre transformation coincide with the usual Legendre transformation
						introduced by Dubrovin \cite{Dubrovin1996}.

 Now we are to apply the above-mentioned generalized Legendre transformations to the generalized Frobenius manifold $M$
 \eqref{F-GAL} with non-flat unity $e$ \eqref{unity e}.

						\begin{ex} \label{ex: Legendre to bi-graded Toda}
							(Bi-graded Toda).
Consider the quasi-homogeneous, invertible Legendre field
							\begin{align*}
								B_1=\frac{\partial}{\partial z^1}
							\end{align*}
							on $M$  \eqref{F-GAL}, then from the relations \eqref{relations4.43} we obtain
the new flat coordinates $\hat z^\afa$ as
							\begin{equation}\label{Legendre bigToda change}
								\left\{
								\begin{aligned}
									\hat{z}^1 &= z^1 + \sqrt{2} z^2 \rme^{z^3} + \rme^{2z^3}, \\[4pt]
									\hat{z}^2 &= z^2 + \sqrt{2}\, \rme^{z^3}, \\[4pt]
									\hat{z}^3 &= z^3 + \log z^1,
								\end{aligned}
								\right.
								\quad
								\big( \hat{\eta}(\partial_{\hat{z}^\alpha}, \partial_{\hat{z}^\beta}) \big) =
								\begin{pmatrix}
									0&0 & 1 \\[4pt]
									0&1 & 0\\[4pt]
									1&0&0
								\end{pmatrix}.
							\end{equation}
							Using \eqref{4.44}, we obtain the potential $ \hat{F} $ of $ \hat{M} = M(\hat{\eta}, c, e, E) $ as follows:
							\begin{equation}\label{4.46}
								\hat{F} = \frac{1}{2} (\hat{z}^1)^2 \hat{z}^3 + \frac{1}{2} \hat{z}^1 (\hat{z}^2)^2 - \frac{1}{24} (\hat{z}^2)^4 + \sqrt{2} \hat{z}^2 \rme^{\hat{z}^3}.
							\end{equation}
							The Euler field $ E $ and the unity $e$ have the form
							\begin{equation*}
								E = \hat{z}^1 \frac{\partial}{\partial \hat{z}^1} + \frac{1}{2} \hat{z}^2 \frac{\partial}{\partial \hat{z}^2} + \frac{3}{2} \frac{\partial}{\partial \hat{z}^3}, \quad e = \frac{\partial}{\partial \hat{z}^1}
							\end{equation*}
with respect to the new coordinates  $\hat z^\afa$.
						\end{ex}

Note that the above $\hat M$ \eqref{4.46} coincides with the corresponding Frobenius manifold of
a three-component extended bi-graded Toda hierarchy \cite{extend bi Toda} with Lax operator
						\begin{align}\label{hatL}
							\hat{L}=\hat \Lmd^2+ \hat u \hat \Lmd+ \hat v+ \hat w\hat\Lmd^{-1},\quad \hat{\Lmd}=\exp\left(\varepsilon \frac{\partial}{\partial t^{1,0}}\right),
						\end{align}
						i.e. $\hat{M}$ \eqref{4.46} can be obtained by applying the Landau--Ginzburg formulae
\eqref{Frob-eta}--\eqref{Frob-c} to the superpotential and primary differential
						\begin{align*}
							\hat{\lambda}(p) = p^2 + \hat{u}p + \hat{v} + \frac{\hat{w}}{p}, \quad \mathrm{d}\omega = \frac{\mathrm{d}p}{p},
						\end{align*}
						with respect to the change of coordinates
						\begin{equation}\label{bi-graded Toda change coord}
							\hat{z}^1 = \hat{v}, \quad \hat{z}^2 = \frac{\hat{u}}{\sqrt{2}}, \quad \hat{z}^3 = \log \hat{w}.
						\end{equation}

We also remark that $\hat{M}$ \eqref{4.46} is isomorphic to the Frobenius manifold $M_{\tilde{W}^{(1)}(\mathsf{A}_2)}$, which is constructed on the orbit space of a certain extended affine Weyl group; see Example 2.2 in \cite{affine Weyl groups} for details.
						\begin{ex}
							(Constrained KP). Consider the quasi-homogeneous, invertible Legendre field
							\begin{align*}
								B_2=\frac{\partial}{\partial z^2}
							\end{align*}
							on $M$ \eqref{F-GAL}, then from the relations \eqref{relations4.43} we obtain
the new flat coordinates $\tilde z^\afa$ as
							\begin{equation}
								\left\{
								\begin{aligned}
									\tilde{z}^1 &= \sqrt{2} z^1 \rme^{z^3} , \\
									\tilde{z}^2 &= z^1 - \frac{1}{2}(z^2)^2, \\
									\tilde{z}^3 &= z^2 + \sqrt{2}\, \rme^{z^3},
								\end{aligned}
								\right.
								\quad
								\big( \tilde{\eta}(\partial_{\tilde{z}^\alpha}, \partial_{\tilde{z}^\beta}) \big) =
								\begin{pmatrix}
									0&0 & 1 \\[4pt]
									0&1 & 0\\[4pt]
									1&0&0
								\end{pmatrix}.
							\end{equation}
							Using \eqref{4.44}, we obtain the potential $ \tilde{F} $ of $ \tilde{M} = M(\tilde{\eta}, c, e, E) $ as follows:
							\begin{equation}\label{4.51}
								\tilde{F} = \tilde z^1\tilde z^2\tilde z^3+\frac{1}{6}\tilde z^1(\tilde z^3)^3+\frac{1}{6}(\tilde z^2)^3+\frac{1}{2}(\tilde z^1)^2\log {\tilde z^1}.
							\end{equation}
							The Euler field $ E $ and the unity $ e $ have the form
							\begin{equation*}
								E = \frac{3}{2}\tilde{z}^1 \frac{\partial}{\partial\tilde{z}^1} +  \tilde{z}^2 \frac{\partial}{\partial \tilde{z}^2} + \frac{1}{2} \tilde z^3\frac{\partial}{\partial \tilde{z}^3}, \quad e = \frac{\partial}{\partial \tilde{z}^2}.
							\end{equation*}
						\end{ex}
						\par Note that the above $\tilde M$ \eqref{4.51} coincides with the corresponding Frobenius manifold of the
						constrained KP hierarchy \cite{cKp} with Lax operator
						\begin{align}\label{KP-L}
							\tilde{L}=\frac{\varepsilon^2}{2}\partial_{\tilde x}^2+\tilde{z}^2+\left(\varepsilon\partial_{\tilde x}-\tilde z^3\right)^{-1}\tilde z^1,\quad \partial_ {\tilde{x}}= \frac{\partial}{\partial t^{2,0}},
						\end{align}
						i.e. $\tilde{M}$ \eqref{4.51} can be obtained by applying the Landau--Ginzburg formulae \eqref{Frob-eta}--\eqref{Frob-c}
to the superpotential and primary differential
						\begin{align*}
							\tilde{\lambda}(p) = \frac{p^2}{2} + \tilde{z}^2 + \frac{\tilde{z}^1}{p-\tilde z^3}, \quad \td \omega = \td p.
						\end{align*}

\section{Linear reciprocal transformations}

In Sect.\,\ref{subsection:Legendre transf}, we introduced two specific generalized Legendre transformations,
relating $M$ \eqref{F-GAL} to the Frobenius manifolds corresponding to
the bi-graded Toda hierarchy and the constrained KP hierarchy respectively.
Therefore, from Theorem 3.13 and Theorem 4.6 of \cite{LiuQuZhang},
the Principal Hierarchies of these generalized Frobenius manifolds
are related by certain linear reciprocal transformations
after adding certain families of flows generated by their Legendre vector fields,
and moreover, their topological deformations are also related by the same linear reciprocal transformations.

In this section,
we will construct two explicit linear reciprocal transformations together with Miura-type transformations that relate the RR2T of type $(2,1)$
to the bi-graded Toda hierarchy and the constrained KP hierarchy respectively.
These facts provide evidence that the RR2T of type $(2,1)$ belongs to the topological deformation of the Principal Hierarchy of $M$ \eqref{F-GAL}
up to a certain Miura-type transformation.

%
%

\subsection{Relation to the extended bi-graded Toda hierarchy}
\label{subsection: bi-graded Toda}

The RR2T of type $(2,1)$ admits a discrete symmetry,
which is analogous to Theorem 3.1 of \cite{extend AL}.
Precisely speaking, applying the wave function transformation
\[
  \phi = (\Lmd-w^+)\psi
\]
to the spectral problem \eqref{250906-1921}
of the $(2,1)$-type RR2T, we obtain
\begin{equation}
  \Lmd(\Lmd^2+u\Lmd+v)(\Lmd-w^+)^{-1}\phi = \lmd\phi.
\end{equation}
It can be verified directly that the following identity holds true:
\[
  \Lmd(\Lmd^2+u\Lmd+v)(\Lmd-w^+)^{-1}
=
  (1-w^\oplus\Lmd^{-1})^{-1}
  (\Lmd^2 + u^\oplus\Lmd + v^\oplus),
\]
where
\begin{align}
  u^\oplus &=\, u^+ + w^{+++} - \frac{(v+uw^++w^+w^{++})^+}{v+uw^++w^+w^{++}}w^+, \label{oplus-1} \\
  v^\oplus &=\, \frac{(v+uw^++w^+w^{++})^+}{v+uw^++w^+w^{++}} v, \\
  w^\oplus &=\, \frac{(v+uw^++w^+w^{++})^+}{v+uw^++w^+w^{++}}w^+.  \label{oplus-3}
\end{align}
Therefore, there is a discrete symmetry $\hat\Lmd$ of the $(2,1)$-type RR2T yielded by
\[
 (u,v,w) \mapsto (u^\oplus, v^\oplus, w^\oplus).
\]
It is clear that \eqref{oplus-1}--\eqref{oplus-3} are also the compatibility conditions between the spectral problem
\eqref{250906-1921} and the discrete evolution problem
\begin{equation}\label{discrete evolution}
  \hat\Lmd\psi = (\Lmd-w^+)\psi.
\end{equation}
Moreover, for any function $f$, we introduce the short notations
\begin{equation}
  f^\oplus := \hat\Lmd f,\quad
  f^\ominus := \hat\Lmd^{-1} f,
\end{equation}
and it can be verified that the actions of $\hat\Lmd^{-1}$ on $u,v,w$ have the form
\begin{align}
  u^\ominus &=\, u^- + w^- - \frac{v+u^-w+ww^-}{(v+u^-w+ww^-)^+}w^+, \label{ominus-1}\\
  v^\ominus &=\, \frac{(v+u^-w+ww^-)^-}{v+u^-w+ww^-}v, \\
  w^\ominus &=\, \frac{(v+u^-w+ww^-)^{--}}{(v+u^-w+ww^-)^-}w^-. \label{ominus-3}
\end{align}
\begin{rmk}\label{rmk:250915-2012}
By comparing with \eqref{PH t10}, it can be verified that for each $f=u,v,w$,
\[
  f^\oplus = f + \veps\pfrac{f}{t^{1,0}} + O(\veps^2).
\]
Hence, by the same argument as in the proof of Theorem 3.1 in \cite{extend AL}, we obtain
\begin{equation}
  \hat\Lmd = \exp\left(
    \veps\pp{t^{1,0}}
  \right),
\end{equation}
where $\pp{t^{1,0}}$ is the deformed flow mentioned in Remark \ref{rmk:extended flows}.
In other words, the discrete symmetry $\hat\Lmd$ is generated by the $\pp{t^{1,0}}$-flow in the extended RR2T of $(2,1)$-type.
\end{rmk}

If we regard the above discrete symmetry $\hat\Lmd$ as a new spatial shift operator,
then the extended RR2T of type $(2,1)$ can be related to a three-component extended bi-graded Toda hierarchy \cite{extend bi Toda}
by a linear reciprocal transformation.
To see this, we rewrite the discrete evolution problem \eqref{discrete evolution} as
\begin{equation} \label{spec-1}
  \Lmd\psi = (\hat\Lmd + w^+)\psi,
\end{equation}
then by applying $\Lmd$ to the above equation iteratively, we obtain
\begin{align}
  \Lmd^2\psi  &=\, \Lmd(\hat\Lmd+w^+)\psi = (\hat\Lmd+w^{++})(\hat\Lmd+w^+)\psi \notag\\
  &=\,
    \left(
      \hat\Lmd^2 + (w^{+\oplus}+w^{++})\hat\Lmd + w^+w^{++}
    \right)\psi, \label{spec-2}
\\
  \Lmd^3\psi
&=\,
  \left[
    \hat\Lmd^3
   +\left(
      w^{+\oplus\oplus} + w^{++\oplus} + w^{+++}
    \right)\hat\Lmd^2
  \right. \notag\\
&\quad
  \left.
    +\left(
      (w^{++\oplus}+w^{+++})w^{+\oplus} + w^{++}w^{+++}
     \right)\hat\Lmd
    +w^+w^{++}w^{+++}
  \right]\psi.   \label{spec-3}
\end{align}
On the other hand, the spectral problem \eqref{250906-1921} can be rewritten as
\[
  (\Lmd^3 + u^+\Lmd^2 + v^+\Lmd)\psi = \lmd\hat\Lmd\psi.
\]
By substituting \eqref{spec-1}--\eqref{spec-3} into the above equation, 
applying $\hat\Lmd^{-1}$ and using \eqref{ominus-1}--\eqref{ominus-3},
we obtain
\begin{equation}\label{spec problem of bigraded Toda}
  \left(
    \hat\Lmd^2 + \hat u\hat\Lmd + \hat v + \hat w\hat\Lmd^{-1}
  \right)\psi = \lmd\psi,
\end{equation}
where
\begin{equation} \label{250917-1514}
\begin{cases}
  \hat u =\, w^{+\oplus} + u+w+w^{++}, \\
  \hat v =\, v+u^-w+uw^++w^-w+ww^++w^+w^{++}, \\
  \hat w =\, v^-w + u^{--}w^-w+w^{--}w^-w.
\end{cases}
\end{equation}
Note that \eqref{spec problem of bigraded Toda} is the spectral problem of a three-component extended bi-graded Toda hierarchy
with Lax operator $\hat L$ given by \eqref{hatL}.

Based on the definition given in \cite{extend bi Toda},
the extended bi-graded Toda hierarchy corresponding to the Lax operator $\hat L$ \eqref{hatL}
consists of the following flows for all $k\geq 0$:
\begin{align}
  \varepsilon \frac{\partial \hat{L}}{\partial \hat{t}^{1,k}}
&=\,
  \frac{3}{2k!}\left[
    \left( \hat{L}^{k} (\log \hat{L}- H_k)\right)_+ , \hat L
  \right], \label{bi-graded Toda t1k}
\\
  \varepsilon \frac{\partial \hat{L}}{\partial \hat{t}^{2,k}}
&=\,
  \frac{2^{k+\frac12}}{(2k+1)!!}
 \left[
    \left( \hat{L}^{k+\frac 12}\right)_+ , \hat L
  \right],
\\
  \varepsilon \frac{\partial \hat{L}}{\partial \hat{t}^{3,k}}
&=\,
  \frac{1}{(k+1)!}\left[
    \left( \hat{L}^{k+1}\right)_+ , \hat L
  \right], \label{bi-graded Toda t3k}
\end{align}
where $H_k := 1 + \frac{1}{2} + \cdots + \frac{1}{k}$ for $k \geq 1$, and $H_0 := 0$.
The logarithm of $\hat L$ is defined by
\begin{equation}\label{log hat L}
  \log \hat L = \frac 23\log_+ \hat L + \frac 13\log_- \hat L,
\end{equation}
where the definitions of $\log_\pm\hat L$ can be found in (16)--(17) of
\cite{extend bi Toda}, and are omitted here.

\begin{rmk}
The choice of coefficients in \eqref{bi-graded Toda t1k}--\eqref{log hat L}
differs from the original definition given in \cite{extend bi Toda},
in order to ensure that the dispersionless limit of this hierarchy coincides with the Principal Hierarchy of the Frobenius manifold $\hat M$ \eqref{4.46}
with respect to the change of coordinates \eqref{bi-graded Toda change coord}.
\end{rmk}

Applying the theory of generalized Legendre transformations, see Theorem 3.13 of \cite{LiuQuZhang}, to Example \ref{ex: Legendre to bi-graded Toda},
we know that the following linear reciprocal transformation
\begin{equation} \label{time identifications 1}
  \hat t^{1,p} = t^{1,p},\quad
  \hat t^{2,p} = t^{2,p},\quad
  \hat t^{3,p} = t^{3,p},\quad p\geq 0
\end{equation}
transforms the $\{\pp{t^{1,p}}, \pp{t^{2,p}}, \pp{t^{3,p}}\}_{p\geq 0}$-flows in the Principal Hierarchy of $M$ \eqref{F-GAL}
to the Principal Hierarchy of $\hat M$ \eqref{4.46},
with respect to the change of coordinates \eqref{Legendre bigToda change}.

Moreover, notice that the dispersionless limits of the extended $(2,1)$-type RR2T and the extended bi-graded Toda hierarchy
\eqref{bi-graded Toda t1k}--\eqref{bi-graded Toda t3k} coincide with the Principal Hierarchies of $M$ \eqref{F-GAL} and $\hat M$ \eqref{4.46}
respectively,
and then, from the arguments on the spectral problems \eqref{spec-1}--\eqref{spec problem of bigraded Toda} we immediately obtain:

\begin{prop}
  Suppose functions $u,v,w$ satisfies the flows \eqref{Lax RR2T-1}--\eqref{Lax RR2T-2}
  and the deformed $\{\pp{t^{1,k}}\}_{k\geq 0}$-flows mentioned in Remark \ref{rmk:extended flows},
  then the functions $\hat u, \hat v, \hat w$ given by \eqref{250917-1514} satisfies
  \eqref{bi-graded Toda t1k}--\eqref{bi-graded Toda t3k} under the identifications \eqref{time identifications 1}.
\end{prop}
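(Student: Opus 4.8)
The plan is to route the equivalence through the common Baker--Akhiezer wave function. Let $\psi=\psi(x,\bft;\lmd)$ denote the wave function of the extended $(2,1)$-type RR2T, i.e. the solution of the spectral problem \eqref{250906-1921}. First I would record how the three families of flows act on $\psi$. From the Lax equations \eqref{Lax RR2T-1}--\eqref{Lax RR2T-2} one reads off, after fixing the normalization of $\psi$,
\[
  \veps\pp{t^{2,k}}\psi=\tfrac{2^{k+\frac12}}{(2k+1)!!}\bigl(L^{k+\frac12}\bigr)_+\psi,
  \qquad
  \veps\pp{t^{3,k}}\psi=\tfrac{1}{(k+1)!}\bigl(L^{k+1}\bigr)_+\psi,
\]
where $(\ )_+$ is the non-negative-$\Lmd$-degree part. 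For the deformed $\pp{t^{1,k}}$-flows I would first define $\log L$ for the RR2T Lax operator in exact analogy with \eqref{log hat L} (the weights $\tfrac23,\tfrac13$ being dictated by the ramification of \eqref{superpotential-3AL}), and then identify the unique deformation furnished by Remark \ref{rmk:extended flows} with
\[
  \veps\pp{t^{1,k}}\psi=\tfrac{3}{2k!}\bigl(L^{k}(\log L-H_k)\bigr)_+\psi,
\]
anchoring $k=0$ on Remark \ref{rmk:250915-2012}, where $\hat\Lmd=\exp(\veps\,\p_{t^{1,0}})$ is matched with the discrete evolution \eqref{discrete evolution}.

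Next I would use the change of shift operator already carried out in \eqref{spec-1}--\eqref{spec problem of bigraded Toda}: the same $\psi$ satisfies $\hat L\psi=\lmd\psi$ with $\hat\Lmd=\exp(\veps\,\p_{t^{1,0}})$ and with $\hat u,\hat v,\hat w$ given by \eqref{250917-1514}. Thus $\psi$ is simultaneously the Baker--Akhiezer function of both hierarchies, $L$ expanded in $\Lmd$ and $\hat L$ in $\hat\Lmd$. Differentiating $\hat L\psi=\lmd\psi$ along $t^{i,k}$ and using $L\psi=\hat L\psi=\lmd\psi$ yields $\veps\,\p_{t^{i,k}}\hat L\cdot\psi=[\,B_{i,k},\hat L\,]\psi$, where $B_{i,k}$ is the $\Lmd$-generator above; the target flows \eqref{bi-graded Toda t1k}--\eqref{bi-graded Toda t3k} have the identical form but with the $\hat\Lmd$-positive generators $\hat B_{i,k}$.

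The key lemma I must establish is that the two positive projections agree on $\psi$: $\bigl(L^n\bigr)_+\psi=\bigl(\hat L^n\bigr)_+\psi$, together with the analogue for the $\log$-dressed operators, the left projection being in $\Lmd$ and the right in $\hat\Lmd$. I would argue this geometrically. By \eqref{discrete evolution} the leading symbols obey $\hat p=p-w$, so $\Lmd$ and $\hat\Lmd$ refer to the same puncture $p=\infty$ of the spectral curve \eqref{superpotential-3AL}; each projection extracts the part of $\lmd^n\psi$ that is polynomial at that puncture, and since polynomials in $p$ coincide with polynomials in $q=p-w$, the two splittings of $\lmd^n\psi$ into non-negative and negative parts coincide. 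The ramification data of \eqref{superpotential-3AL} (order two at $p=\infty$, order one at the finite pole of $\lmd$) are exactly those built into \eqref{log hat L}, so the weighted $\log$ splittings are compatible as well. Granting the lemma, $B_{i,k}\psi=\hat B_{i,k}\psi$, hence $\veps\,\p_{t^{i,k}}\hat L=[\hat B_{i,k},\hat L]$, which are precisely \eqref{bi-graded Toda t1k}--\eqref{bi-graded Toda t3k} under the identification \eqref{time identifications 1}.

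The hard part will be making this projection lemma rigorous at the full dispersive level rather than only dispersionlessly, especially for the half-integer powers entering the $\pp{t^{2,k}}$-flows and for the logarithmic operators entering the $\pp{t^{1,k}}$-flows, where $\log L$ and $\log\hat L$ are genuinely assembled from different $\log_\pm$ splittings; one must check puncture by puncture that these splittings agree on $\psi$, which is where the bulk of the (omitted) computation resides, paralleling Sect.\,2 of \cite{extend AL}. A secondary and routine point is that the scalar prefactors transport correctly, which is automatic because the generating series $\lmd^{k+1}$, $\lmd^{k+\frac12}$ and $\lmd^k\log\lmd$ are intrinsic to the spectral curve and independent of the chosen shift operator.
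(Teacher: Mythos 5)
Your route through the common Baker--Akhiezer function is in substance the same as the paper's: the paper's entire proof consists of the spectral-problem intertwining \eqref{spec-1}--\eqref{spec problem of bigraded Toda} (the same wave function $\psi$ satisfies $L\psi=\lmd\psi$ in $\Lmd$ and $\hat L\psi=\lmd\psi$ in $\hat\Lmd$, with \eqref{250917-1514} as the induced Miura map) together with the observation that the dispersionless limits of the two hierarchies are the Principal Hierarchies of $M$ and $\hat M$, after which it asserts the result "immediately." What you add is the explicit identification of the technical lemma the paper suppresses, namely $(L^n)_+\psi=(\hat L^n)_+\psi$; this is correct and is best closed not by the dispersionless spectral-curve picture you sketch but by the operator argument: since $\Lmd\psi=(\hat\Lmd+w^+)\psi$ and $\hat\Lmd\psi=(\Lmd-w^+)\psi$, non-negative parts map to non-negative parts and strictly negative parts (expanding $(\hat\Lmd+w^+)^{-1}$ in negative powers of $\hat\Lmd$) to strictly negative parts, so the difference of the two projections acting on $\psi$ is simultaneously a non-negative polynomial and a strictly negative series in $\hat\Lmd$, hence zero; the half-integer case follows from uniqueness of the square root of $\hat L$ with symbol $\hat\Lmd+O(1)$. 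The one place where your route demands strictly more than the paper's is the $\{\pp{t^{1,k}}\}$-flows: you posit a Lax representation $\tfrac{3}{2k!}(L^k(\log L-H_k))_+$ on the RR2T side, which the paper never constructs and which would itself require proof (commutativity with the other flows, compatibility with the bihamiltonian structure). The paper avoids this entirely: those flows are defined only as the \emph{unique} deformations of the corresponding Principal Hierarchy flows (Remark \ref{rmk:extended flows}, via triviality of the first bihamiltonian cohomology), so once the $t^{2,k}$- and $t^{3,k}$-flows are transported by the spectral argument and the dispersionless limits are matched, the $t^{1,k}$-flows must correspond by uniqueness --- no $\log L$ on the RR2T side is ever needed. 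I would recommend replacing your logarithmic projection lemma, which you correctly flag as the hard and unproven part, by this uniqueness argument.
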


For example, by using \eqref{oplus-1}--\eqref{oplus-3}, \eqref{ominus-1}--\eqref{ominus-3},
it can be verified directly that
the $\pp{t^{2,0}}$- and $\pp{t^{3,0}}$-flows in \eqref{rad-flow-u}--\eqref{positive-flow-w}
are equivalent to the following flows
						\begin{align}
							\varepsilon \frac{\partial \hat{u}}{\partial \hat{t}^{2,0}} &=
\sqrt{2}\,(\hat\Lmd-1)\hat v+\sqrt{2}\,\hat u\frac{1-\hat\Lmd}{1+\hat\Lmd}\hat u,
							\\
							\varepsilon \frac{\partial \hat{v}}{\partial \hat{t}^{2,0}}
&=\sqrt{2}\,(\hat{\Lambda}-1)\hat{w} ,
						\quad
							\varepsilon\frac{\partial \hat{w}}{\partial \hat{t}^{2,0}} =
\sqrt{2}\,\hat{w}\frac{1-\hat{\Lambda}^{-1}}{1+\hat{\Lambda}}\hat{u} ,
\\
							\varepsilon \frac{\partial \hat{u}}{\partial \hat{t}^{3,0}} &=(\hat{\Lambda}^{2}-1)\hat{w} ,
		\quad
							\varepsilon \frac{\partial \hat{v}}{\partial \hat{t}^{3,0}} =\hat{u}\hat{\Lambda}\hat{w}-\hat{w}\hat{\Lambda}^{-1}\hat{u} ,
\quad
							\varepsilon \frac{\partial \hat{w}}{\partial \hat{t}^{3,0}} =\hat{w}(1-\hat{\Lambda}^{-1})\hat{v}
						\end{align}
in the extended bi-graded Toda hierarchy \eqref{bi-graded Toda t1k}--\eqref{bi-graded Toda t3k}
with respect to the changes \eqref{250917-1514} and \eqref{time identifications 1}.
						
\subsection{Relation to the constrained KP hierarchy}

If we regard the time variable $t^{2,0}$ in \eqref{rad-flow-u}--\eqref{rad-flow-w} as a new spatial variable $\tilde x$,
then the RR2T of type $(2,1)$ can be related to the three-component constrained KP hierarchy \cite{Cheng,cKp,Oevel-cKP}
by a linear reciprocal transformation.
To see this, we note that the $\pp{t^{2,0}}$-flow \eqref{rad-flow-u}--\eqref{rad-flow-w} in the RR2T of type $(2,1)$
is the compatibility condition between the spectral problem \eqref{250906-1921} and the following time-evolution problem
\begin{equation}
  \veps
  \pfrac{\psi}{t^{2,0}} = \sqrt{2}\,(\Lmd+b)\psi,\quad \text{where}\,\, b:=\frac{1}{1+\Lmd}(u+w).
\end{equation}
Let us rewrite the above equation as
\begin{equation}\label{ckp spec 1}
  \Lmd\psi = \left(\frac{\veps}{\sqrt{2}}\p_{\tilde x} - b\right)\psi, \quad \tilde x:= t^{2,0},
\end{equation}
then by applying $\Lmd$ to this equation iteratively, we obtain
\begin{align}
  \Lmd^2\psi
&=\,
  \left(
    \frac{\veps}{\sqrt{2}}\p_{\tilde x} - b^+
  \right)
  \left(
    \frac{\veps}{\sqrt{2}}\p_{\tilde x}-b
  \right)\psi \notag\\
&=\,
  \left[
    \frac{\veps^2}{2}\p_{\tilde x}^2
   -\frac{\veps}{\sqrt{2}}\left(b+b^+\right)\p_{\tilde x}
   +\left(bb^+-\frac{\veps}{\sqrt{2}}b_{\tilde x}\right)
  \right]\psi, \label{ckp spec 2}
\\
  \Lmd^3\psi
&=\,
  \left[
    \frac{\veps^3}{2\sqrt{2}}\p_{\tilde x}^3
   -\frac{\veps^2}{2}\left(b+b^++b^{++}\right)\p_{\tilde x}^2
  \right. \notag \\
&\qquad
  +\frac{\veps}{\sqrt{2}}
    \left(
      bb^+ + bb^{++} + b^+b^{++}
     -\frac{\veps}{\sqrt{2}}(b_{\tilde x}^+ + 2b_{\tilde x})
    \right)\p_{\tilde x} \notag \\
&\qquad
  \left.
    -\left(
      bb^+b^{++}
    -\frac{\veps}{\sqrt{2}}
     \left(
       b^+b_{\tilde x} + b^{++}b_{\tilde x} + bb^+_{\tilde x}
     \right)
    +\frac{\veps^2}{2}b_{\tilde x \tilde x}
    \right)
  \right]\psi.  \label{ckp spec 3}
\end{align}
On the other hand, the spectral problem \eqref{250906-1921} can be rewritten as
\[
  \sqrt{2}\,\left(
    \Lmd^3 + u^+\Lmd^2 + v^+\Lmd
  \right)\psi
=
  \sqrt{2}\,\lmd (\Lmd - w^+)\psi.
\]
Substituting \eqref{ckp spec 1}--\eqref{ckp spec 3} into the above equation
and using $b+b^+ = u+w$,
we obtain
\begin{equation}\label{250920-2253}
  \left(
    \frac{\veps^3}{2}\p_{\tilde x}^3
   -\frac{\veps^2}{2}\tilde z^3\p_{\tilde x}^2
   +\veps\tilde z^2\p_{\tilde x}
   +\Big(
      \tilde z^1 - \tilde z^2 \tilde z^3 + \veps\tilde z^2_{\tilde x}
    \Big)
  \right)\psi
=
  \lmd\left(\veps\p_{\tilde x} - \tilde z^3\right)\psi,
\end{equation}
where the new functions $\tilde z^1, \tilde z^2, \tilde z^3$ are given by
\begin{align}
  \tilde z^1
&=\,
  \sqrt{2}\,w^+
  \Big(
    w^+(b+b^+) + bb - b^+b^+ + v^+
  \Big)\notag \\
&\qquad
  +\veps\Big(
    2b^+b^+_{\tilde x} - 2bb_{\tilde x}
   -(b+b^+)w^+_{\tilde x}
   -2w^+(b+b^+)_{\tilde x}
   -v^+_{\tilde x}
  \Big) \notag \\
&\qquad
  +\frac{\veps^2}{\sqrt{2}}
   \Big(
     b+b^+
   \Big)_{\tilde x \tilde x}, \label{tilde z1}
\\
  \tilde z^2
&=\,
  \Big(
     w^+(b+b^+) - b^+b^+ + v^+
  \Big)
 -\frac{\veps}{\sqrt{2}}
  \Big(
    b^+_{\tilde x} + 2b_{\tilde x}
  \Big),
\\
  \tilde z^3
&=\,
  \sqrt{2}\,\Big( b + w^+ \Big). \label{tilde z3}
\end{align}
Notice that \eqref{250920-2253} is equivalent to
\begin{equation}
  \left(
    \frac{\veps^2}{2}\p_{\tilde x}^2
   +\tilde z^2
   +(\veps\p_{\tilde x}-\tilde z^3)^{-1}\tilde z^1
  \right)\psi = \lmd \psi,
\end{equation}
which is the spectral problem of the constrained KP hierarchy \cite{cKp} with Lax operator $\tilde L$ given by \eqref{KP-L}.

Based on the definition given in \cite{cKp},
the constrained KP hierarchy corresponding to the Lax operator $\tilde L$ given by \eqref{KP-L}
consists of the following flows for all $k\geq 0$:
\begin{align}\label{cKP-1}
								\veps \frac{\partial \tilde{L}}{\partial \tilde t^{2,k}}
&=\, \frac{2^{k+\frac12}}{(2k+1)!!}\left[\left(\tilde{L}^{k+\frac{1}{2}}\right)_+, \tilde{L}\right],
\\
  \veps\pfrac{\tilde L}{\tilde t^{3,k}}
&
=\,
  \frac{1}{(k+1)!}
  \left[
    \left(
    \tilde L^{k+1}
    \right)_+ , \tilde L
  \right].\label{cKP-2}
\end{align}
It can be verified directly that
\[
  \pp{\tilde t^{2,0}} = \p_{\tilde x},
\]
which is the flow given by the translation along the new spatial variable $\tilde x := t^{2,0}$.
Moreover, one can also verify that the dispersionless limits of the flows \eqref{cKP-1}--\eqref{cKP-2} coincide with the
corresponding flows in the Principal Hierarchy of the Frobenius manifold $\tilde M$ \eqref{4.51}.
Thus, in analogy with the previous subsection, we immediately obtain:

\begin{prop}
  Suppose functions $u,v,w$ satisfies the flows \eqref{Lax RR2T-1}--\eqref{Lax RR2T-2},
  then the functions $\tilde z^1, \tilde z^2, \tilde z^3$ given by \eqref{tilde z1}--\eqref{tilde z3} satisfies
  \eqref{cKP-1}--\eqref{cKP-2} under the identifications
  \[
    \tilde t^{2,k} = t^{2,k}, \qquad
    \tilde t^{3,k} = t^{3,k}
  \]
  for all $k\geq 0$.
\end{prop}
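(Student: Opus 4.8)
The plan is to realize both hierarchies as compatibility conditions of one and the same auxiliary linear system, written in two coordinate frames related by the reciprocal transformation $\tilde x = t^{2,0}$. Recall that the flows \eqref{Lax RR2T-1}--\eqref{Lax RR2T-2} are the compatibility conditions of $L\psi=\lmd\psi$ together with the evolutions $\veps\pp{t^{2,k}}\psi=\frac{2^{k+\frac12}}{(2k+1)!!}(L^{k+\frac12})_+\psi$ and $\veps\pp{t^{3,k}}\psi=\frac{1}{(k+1)!}(L^{k+1})_+\psi$, while \eqref{cKP-1}--\eqref{cKP-2} are the compatibility conditions of the analogous system obtained by replacing $L,\Lmd,t^{i,k}$ with $\tilde L,\partial_{\tilde x},\tilde t^{i,k}$. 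The derivation \eqref{ckp spec 1}--\eqref{250920-2253} already identifies the two spectral problems $L\psi=\lmd\psi$ and $\tilde L\psi=\lmd\psi$ under the Miura-type map \eqref{tilde z1}--\eqref{tilde z3}, so under the identifications $\tilde t^{i,k}=t^{i,k}$ it remains only to match the time-evolution operators on the common wave function, namely to prove
\[
  \left(L^{k+1}\right)_+\psi=\left(\tilde L^{k+1}\right)_+\psi,
  \qquad
  \left(L^{k+\frac12}\right)_+\psi=\left(\tilde L^{k+\frac12}\right)_+\psi
\]
for all $k\ge 0$. Granting these, the two auxiliary systems become identical, their compatibility conditions coincide, and the evolution of $u,v,w$ by \eqref{Lax RR2T-1}--\eqref{Lax RR2T-2} forces the evolution of $\tilde z^1,\tilde z^2,\tilde z^3$ by \eqref{cKP-1}--\eqref{cKP-2}.

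To prove the operator identities I would work entirely at the level of the action on $\psi$. For every exponent $s$ one has $L^s\psi=\lmd^s\psi=\tilde L^s\psi$. Iterating \eqref{ckp spec 1} expresses $\Lmd^j\psi$ for $j\ge 0$ as a differential operator in $\partial_{\tilde x}$ of order exactly $j$ applied to $\psi$, with leading term $(\veps/\sqrt2)^j\partial_{\tilde x}^j$, and the formal inverse of $\tfrac{\veps}{\sqrt2}\partial_{\tilde x}-b$ expresses $\Lmd^{-j}\psi$ for $j\ge 1$ as a pseudodifferential operator of order $-j$. Hence applying the $\Lmd$-splitting $L^s=(L^s)_++(L^s)_-$ to $\psi$ produces a genuine differential operator $D_+$ (from the $\Lmd^{\ge 0}$ terms) and a strictly negative-order pseudodifferential operator $D_-$ (from the $\Lmd^{\le -1}$ terms), both with $\lmd$-independent coefficients, such that $D_+\psi=(L^s)_+\psi$ and $D_-\psi=(L^s)_-\psi$. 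Since $D_++D_-$ and $\tilde L^s$ both act as multiplication by $\lmd^s$ on the whole family $\{\psi_\lmd\}$, and a nonzero pseudodifferential operator cannot annihilate the Baker--Akhiezer function for all spectral values, we get $D_++D_-=\tilde L^s$ as operators; the uniqueness of the order-splitting of a pseudodifferential operator then gives $D_+=(\tilde L^s)_+$, whence $(L^s)_+\psi=(\tilde L^s)_+\psi$. Taking $s=k+1$ and $s=k+\tfrac12$ yields the two required identities, the half-integer case using that $L^{\frac12}$ and $\tilde L^{\frac12}$ share the leading term $\tfrac{\veps}{\sqrt2}\partial_{\tilde x}$ and hence the same branch of the square root.

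The step I expect to be the main obstacle is exactly this matching of the two projections $(\,\cdot\,)_+$. The substitution $\Lmd\mapsto\frac{\veps}{\sqrt2}\partial_{\tilde x}-b$ is \emph{not} an algebra homomorphism on difference operators, since the coefficients are shifted in $x$ while being differentiated in $\tilde x$, so the positive-part projection cannot be transported formally, and one must argue on the $\psi$-module, where $\Lmd$ is realized through the $\tilde x$-evolution \eqref{ckp spec 1} and commutes with $\partial_{\tilde x}$. The structural fact that rescues the argument is that, acting on $\psi$, non-negative powers of $\Lmd$ become non-negative-order differential operators while negative powers become negative-order ones, so the two a priori distinct splittings are forced to coincide. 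Once this is in place, the verification that \eqref{tilde z1}--\eqref{tilde z3} is precisely the change of variables making $\tilde L\psi=\lmd\psi$ hold is already contained in the passage leading to \eqref{250920-2253}, and the proposition follows, in complete analogy with the extended bi-graded Toda case treated in Sect.\,\ref{subsection: bi-graded Toda}.
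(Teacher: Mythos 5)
Your proposal is correct and takes essentially the same route as the paper: the paper also derives the proposition from the identification of the two auxiliary linear problems under the reciprocal transformation $\tilde x=t^{2,0}$ (the passage \eqref{ckp spec 1}--\eqref{250920-2253}), combined with the observation that the dispersionless limits match the Principal Hierarchies related by the Legendre field $B_2$, and then states the result as following ``immediately.'' The one genuine addition in your write-up is the explicit justification that the $\Lambda$-splitting $(L^s)_\pm$ transports to the $\partial_{\tilde x}$-splitting $(\tilde L^s)_\pm$ on the common wave function --- a step the paper leaves implicit --- and your argument for it (non-negative powers of $\Lambda$ act as differential operators, negative powers as strictly negative-order pseudodifferential operators, plus uniqueness of the order-splitting on the Baker--Akhiezer module) is the right one.
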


For example, if $u,v,w$ satisfies \eqref{rad-flow-u}--\eqref{positive-flow-w},
then it can be verified by straightforward calculations that
the functions $\tilde z^1, \tilde z^2, \tilde z^3$ given by \eqref{tilde z1}--\eqref{tilde z3} satisfies
the flows
\begin{align}
\frac{\partial \tilde{z}^1}{\partial \tilde t^{3,0}}
&= \tilde{z}^{3}\tilde{z}^1_{\tilde x} + \tilde{z}^{1}\tilde{z}^3_{\tilde x} - \frac{\veps}{2}\tilde{z}^1_{\tilde x\tilde x}, \qquad
\frac{\partial \tilde{z}^2}{\partial \tilde t^{3,0}}
 = \tilde{z}^1_{\tilde x},
\\
\frac{\partial \tilde{z}^3}{\partial \tilde t^{3,0}}
&= \tilde{z}^{3}\tilde{z}^3_{\tilde x} + \tilde{z}^2_{\tilde x} + \frac{\veps}{2}\tilde{z}^3_{\tilde x \tilde x}.
\end{align}
in the constrained KP hierarchy \eqref{cKP-2}, where $\tilde x := t^{2,0}$.

\begin{rmk}
  The relation between the bi-graded Toda and the constrained KP hierarchy was studied in \cite{FuYangZuo},
  which generalizes the result for the Toda hierarchy and the nonlinear Schr\"odinger hierarchy \cite{extended Toda}.
\end{rmk}

\section{Conclusions}

 In this paper, we have derived the local bihamiltonian structures of the asymmetric RR2T of types $(1,2)$ and $(2,1)$ separately, constructed a three-dimensional generalized Frobenius manifold $M$, and shown that the dispersionless limits of the flows of the $(2,1)$-type RR2T belong to the Principal Hierarchy of $M$.
 For the two hierarchies mentioned above and their corresponding bihamiltonian structures, we have established an equivalence relation via a certain Miura transformation and spatial reflection, and computed the central invariants of the associated bihamiltonian structures.

 We also provide two explicit generalized Legendre transformations that relate $M$ to the Frobenius manifold corresponding to the three-component bi-graded Toda hierarchy and the constrained KP hierarchy, and we investigate linear reciprocal transformations among these hierarchies.

 In future work, we would like to present a general theory for the asymmetric RR2T of type $(1,n)$ and $(n,1)$ for general $n\in\bbZ_+$,
 and conjecture that the central invariants of the bihamiltonian structure of the
 $(n,1)$-type RR2T are all equal to $\frac{1}{24}$.
 Furthermore, it would be interesting to study the Hamiltonian structures and properties of the symmetric RR2T of type $(n,n)$,
 which represents another generalization of the Ablowitz--Ladik hierarchy distinct from \eqref{TLO}.

\vskip 0.3cm
\noindent \textbf{Acknowledgements.}
The authors would like to thank Youjin Zhang and Si-Qi Liu for very useful discussions on this work. This work was supported by NSFC No. 12501323. The second author was supported by the ``2024 Visiting Scholar Program of Shandong University of Science and Technology'' and her work was done during her visit to Tsinghua University.

					\end{document}